\title{Optimistic Imprecise Shortest Watchtower in 1.5D and 2.5D}
\titlerunning{Imprecise Watchtower}
\author{}
\institute{}
\author{Bradley McCoy \inst{1}\orcidID{0009-0009-1450-5978} \and Binhai Zhu \inst{2}\orcidID{0000-0002-3929-4128}}
\institute{
{Department of Computer Science, James Madison University, Harrisonburg, VA.
\email{mccoy2ba@jmu.edu}.}
\and
{Gianforte School of Computing, Montana State University, Bozeman, MT. \email{bhz@montana.edu}}
}
\authorrunning{B. McCoy and B. Zhu}
\date{}
\begin{document}

\maketitle


\begin{abstract}

A 1.5D imprecise terrain is an $x$-monotone polyline with fixed $x$-coordinates, the $y$-coordinate of each vertex is not fixed but is constrained to be in a given vertical interval. A 2.5D imprecise terrain is a triangulation with fixed $x$ and $y$-coordinates, but the $z$-coordinate of each vertex is constrained to a given vertical interval. Given an imprecise terrain with $n$ intervals, the optimistic shortest watchtower problem asks for a terrain $T$ realized by a precise point in each vertical interval such that the height of the shortest vertical line segment whose lower endpoint
lies on $T$ and upper endpoint sees the entire terrain is minimized.
In this paper, we present a linear time algorithm to solve the 1.5D optimistic shortest watchtower problem exactly. For the discrete version of the 2.5D case (where the watchtower must be placed on a vertex of $T$), and we give an additive approximation scheme running in $O(\frac{{OPT}}{\varepsilon}n^3)$ time, achieving a solution within an additive error of $\varepsilon$ from the optimal solution value ${OPT}$.

\end{abstract}

\section{Introduction}
\label{sec:intro}

Placing watchtowers on polyhedral terrains is a fundamental problem in computational geometry.
In practice, terrains are often constructed by surveying aircraft, which record distances to the ground.
Since these measurements inevitably contain errors, the resulting
data typically specifies not precise heights but intervals of possible elevations.
This motivates the imprecise watchtower problem, a natural and practical extension of its classical counterpart.

Imprecise terrains were introduced by Gray and Evans \cite{gray-shortest-hard-04}, who
showed that computing an optimistic shortest path on a 2.5D imprecise terrain is NP-hard. Since then, the problem of finding realizations of imprecise terrains with desirable properties has developed into an active research area.

For a given imprecise terrain and geometric problem, one can ask
for a realization that yields the best (optimistic) possible outcome, or one that yields the worst (pessimistic) outcome.
Examples include approximation algorithms for minimizing turning angles in 1.5D in \cite{gray_smoothing_2010}, 
minimizing the number of extrema in 2.5D  \cite{gray_removing_2012}, and water flow on 2.5D  terrains \cite{driemel_flow_2011}. More recently, Lubiw and Stround \cite{Lubiw_Stroud_2023} studied algorithms for optimizing coplanar features, minimizing surface area, and minimizing maximum steepness.

For precise terrains, guarding with a single shortest watchtower has a rich history. In 1988, Sharir \cite{sharir_shortest_1988} gave an $O(n\log^2(n))$ time algorithm for guarding a 2.5D terrain with $n$ vertices, later
improved to $O(n\log n)$ by Zhu \cite{zhu_computing_1997}.
A shortest watchtower for a precise 1.5D terrain can be computed in 
linear time using the algorithm by Lee and Preparata \cite{lee_optimal_1979}.
There are two natural versions of the watchtower problem: the discrete version, where the watchtower base must be on a vertex of the terrain, and the continuous version, where it may be placed anywhere on the terrain. The results mentioned above are for the continuous version.
Several other variations of guarding a precise terrain have also been studied
\cite{benmoshe_constantfactor_2007,CS89,chen_guarding_2018,agarwal_guarding_2010,seth_acrophobic_2023,king_terrain_2011}, and more recently, maximizing visibility on a 1.5D terrain with imprecise viewpoints was considered in \cite{Loffler_viewpoints_2025}. 
 
For imprecise watchtowers, McCoy, Zhu, and Dutt \cite{mccoy_guarding_2024} gave an additive PTAS for the optimistic 1.5D imprecise shortest watchtower problem.
In this paper, we strengthen that result by presenting an exact linear time algorithm for the 1.5D case.
We then extend our result to the discrete version of the problem in 2.5D, for which we give an additive approximation scheme running in $O(\frac{{OPT}}{\varepsilon}n^3)$ time, where ${OPT}$ is the optimal solution value of the problem.

This paper is organized as follows.  \secref{preliminaries} provides necessary definitions. \secref{algorithm} presents our linear time algorithm for the 1.5D case. \secref{ptas} gives details for the 2.5D case. We conclude in \secref{discuss} with a discussion of related open problems.

\section{Preliminaries}
\label{sec:preliminaries}

 A \emph{1.5D terrain} is an $x$-monotone polyline with $n$ vertices.
 An \emph{imprecise 1.5D terrain} is a 1.5D terrain where each terrain vertex comes with a closed vertical interval rather than a fixed $y$-coordinate. We denote these $n$ vertical intervals
$\ell_1,\ell_2,\ldots, \ell_n$.
A \emph{realization} of an imprecise terrain is obtained by selecting a $y$-coordinate from each interval, thereby defining a point on every $\ell_i$ (see \figref{1.5D_with_polygon}). 
By slight abuse of notation, we refer to the polyline resulting from a realization as having \emph{vertices} and \emph{edges}, as in the precise case.
For each interval $\ell_i,$ let $v_i$ denote the chosen vertex, and let $t_i$ and $b_i$ denote the top and bottom endpoints of $\ell_i$, respectively.
A vertex is called \emph{left-turning} if, when the polyline is traversed from left to right, it makes a left (upward) turn at that vertex. Similarly, it is  \emph{right-turning} if the polyline turns right (downward),
and \emph{straight} if it does not turn.

Given a realization of an imprecise terrain $T$, 
consider the halfplanes defined by extending each edge of $T$
and selecting the side that contains the point $(0,+\infty)$.
The intersection of these halfplanes forms an unbounded convex polygon $P,$  called the \emph{visibility polygon}.
This polygon can be computed in linear time, has $O(n)$ vertices
 \cite{lee_optimal_1979}, and in  \figref{1.5D_with_polygon} corresponds to the shaded triangle.
 
A shortest watchtower for $T$ is a vertical line segment $\overline{uw}$, where the lower endpoint $u$ lies on $T$ and
the upper endpoint $w$ lies on $P.$
Two points $p$ and $q$ are \emph{visible} if the segment $\overline{pq}$ is above $T$. 
If every point in an edge is visible to a point we
say the edge is visible, similarly for faces.
If every point of $T$ is visible from a point $w,$ then we say
that $T$ is \emph{guarded} by $w.$
The \emph{optimistic shortest watchtower problem} asks for a realization $T$ that minimizes the length $|uw|$ of the shortest watchtower, where $u\in T$ and $w$ sees all of $T.$

\begin{figure}[t]
    \captionsetup[subfigure]{justification=centering}
    \centering
    \begin{subfigure}[t]{0.4\textwidth}
    	\includegraphics[width=\textwidth]{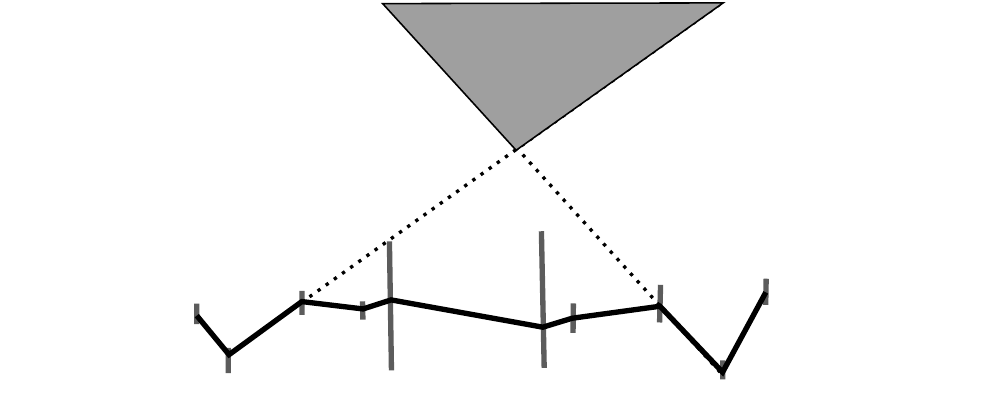}
	    \subcaption{}
    \label{fig:1.5D_with_polygon}
    \end{subfigure}
    \hspace{.75cm}
    \begin{subfigure}[t]{0.3\textwidth}
  	    \includegraphics[width=\textwidth]{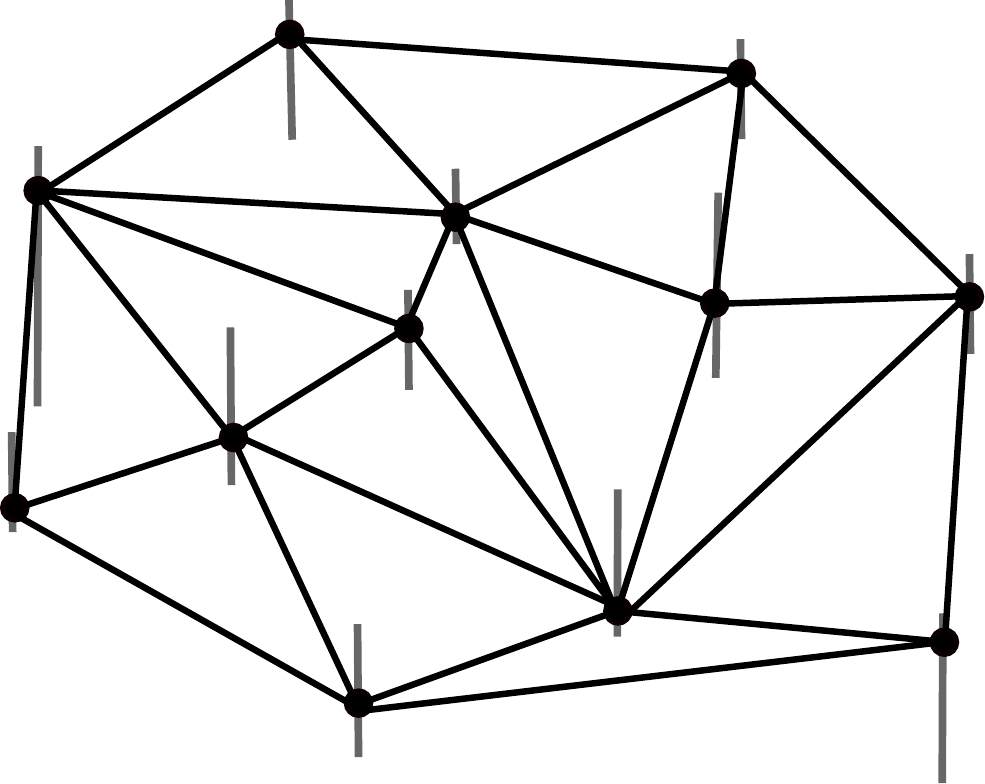}
	    \subcaption{}
    \label{fig:2.5D_imprecise}
    \end{subfigure}

	\caption{(\subref{fig:1.5D_with_polygon}) A possible realization of a 1.5D imprecise terrain. The shaded region represents the unbounded convex polygon $P$. (\subref{fig:2.5D_imprecise}) A possible realization of a 2.5D imprecise terrain.}
\label{fig:intro}
\end{figure}

Let $Q$ be the simple polygon formed as follows:
take the top and bottom endpoints of the intervals $\ell_i$
as vertices, connect consecutive top endpoints and consecutive bottom endpoints to form two chains, and include the vertical edges $\ell_1$ and $\ell_n.$ 
The collection of shortest paths from a fixed vertex to
every other vertex in a polygon can be computed in
linear time \cite{guibas_linear-time_1987}.
Let a shortest path $\pi$ from a point $s$ to a point $t$ in $Q$, the 
vertices of $\pi \setminus \{s,t\}$ are called \emph{inner vertices}.

In the \emph{imprecise 2.5D terrain model}, we are given $n$ fixed $x,y$-coordinates together with a triangulation defined 
by projecting the vertices onto the $xy$ plane. The $z$-coordinate of each vertex is specified only up to a given interval (\figref{2.5D_imprecise}). A \emph{realization} of an imprecise 2.5D terrain is obtained by selecting a $z$-coordinate within the allowed interval for each vertex. 


\section{Algorithm for the 1.5D Case}
\label{sec:algorithm}

In this section, we present an $O(n)$ time algorithm to compute an optimistic shortest watchtower on an imprecise 1.5D terrain. We show that any realization can be transformed into one of $O(n)$ canonical realizations without increasing the height of the watchtower.  
We first consider the discrete version, where the base of the watchtower is restricted to a vertex of the terrain, and then extend the argument to the continuous version, where the base may lie anywhere on an edge. The following lemmas apply to both cases.

\begin{lemma}[Raise Wings]\label{lem:wings}
Let $W$ be a watchtower for a realization $T$, and let $T'$ be the realization obtained from $T$ by raising $v_1$ and $v_n$ to the tops of their respective intervals. Then $W$ is also a watchtower for $T'$.
\end{lemma}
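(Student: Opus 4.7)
The plan is to use the characterization from the preliminaries that $w$ sees all of $T$ iff $w$ lies in the intersection $P$ of the upper halfplanes of the edge lines of $T$. Writing $T'$ and $P'$ for the modified terrain and its visibility polygon, and noting that raising $v_1$ to $t_1$ and $v_n$ to $t_n$ alters only the first edge (where $\overline{v_1 v_2}$ becomes $\overline{t_1 v_2}$) and the last edge (where $\overline{v_{n-1} v_n}$ becomes $\overline{v_{n-1} t_n}$), I aim to show that the top $w$ of $W$ still lies in both new halfplanes, hence in $P'$, and that the base $u$ can be placed on $T'$ without lengthening $W$.

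The core geometric step is a pivot argument for the first edge. The old and new first edge lines share the point $v_2$, so the change is a rotation of the edge line about $v_2$; raising $v_1$ at fixed $x$-coordinate to the strictly higher point $t_1$ forces the new slope to be no greater than the old. Consequently, in the halfplane $\{x \geq x(v_2)\}$ the new upper halfplane of the first edge line contains the old one. Since $W$ is a watchtower whose top $w$ sees every vertex to the right of $v_1$, a short argument using $x$-monotonicity of $T$ and visibility of $v_2$ from $w$ places $x(w)$ at least $x(v_2)$. Therefore $w$ remains in the new upper halfplane; the symmetric argument with pivot $v_{n-1}$ shows $w$ also lies in the new upper halfplane of the last edge line. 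Combining these two containments with the fact that all other halfplanes in the definition of $P$ are unchanged, $w \in P'$.

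For the base $u$: if $u$ lies on an edge other than the first or last, then $u \in T \cap T'$ and we are done. Otherwise, $u$ lies on the first (resp.\ last) edge of $T$ and we shift it vertically upward onto the modified first (resp.\ last) edge of $T'$. Because the wings have been raised, the terrain has been lifted in the corresponding $x$-range, so this shift only decreases $|uw|$. Either way, $W$ (possibly with its base raised) remains a valid watchtower for $T'$.

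The main obstacle is the location claim in the pivot step: showing that $x(w)$ is sandwiched between $x(v_2)$ and $x(v_{n-1})$. This requires carefully combining the $x$-monotonicity of $T$ with the visibility of both $v_2$ and $v_{n-1}$ from $w$; without this constraint the rotation of the first edge line about $v_2$ could in principle shrink, rather than enlarge, the relevant upper halfplane on the side containing $w$.
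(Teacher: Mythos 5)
Your overall strategy---viewing the change as a rotation of the first edge line about $v_2$ (and of the last about $v_{n-1}$) and showing the tower top $w$ stays in the two new upper halfplanes---is a legitimate alternative to the paper's argument, which instead establishes convexity of the quadrilateral $w\,v_2\,v_1\,t_1$ and reads off visibility of the new edge $t_1v_2$ as a diagonal. However, the step you yourself flag as the main obstacle is a genuine gap, and it cannot be closed as stated: the claim $x(v_2)\le x(w)\le x(v_{n-1})$ is false in general. Nothing prevents the base $u$ from lying on the first edge, or at $v_1$ itself, in which case $x(w)=x(u)<x(v_2)$; visibility of $v_2$ from $w$ imposes no lower bound on $x(w)$, since every point of $P$ lying above the first edge sees $v_2$ (indeed all of $T$). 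For $n=3$ your sandwich would force $x(w)=x(v_2)$ exactly, which already shows the claim is too strong. Concretely, take $v_1=(0,0)$, $v_2=(1,1)$, $v_3=(2,0)$ and $t_1=(0,10)$: the segment from $(0,0)$ to $(0,5)$ is a valid watchtower for $T$ (the point $(0,5)$ lies above both edge lines $y=x$ and $y=2-x$), yet $(0,5)$ lies below the rotated first edge line $y=10-9x$, so it is not in $P'$. This is precisely the regime where the rotation about $v_2$ shrinks, rather than enlarges, the relevant upper halfplane on $w$'s side.

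To repair the argument you would have to treat the case $x(w)<x(v_2)$ (and symmetrically $x(w)>x(v_{n-1})$) separately---for instance by using the fact that such a $w$ must also lie above the leftward extension of the second edge line, or by allowing the watchtower to be translated upward together with the raised wing, since only the watchtower \emph{height} matters for the downstream theorems. Your treatment of the base has the same blind spot: when $u$ lies on the first edge, "shift it vertically upward onto the modified edge" can push the base above the old top $w$, which is exactly the failure mode above. Note that the paper's own proof is also silent on this configuration (when $x(w)=x(v_1)$ the quadrilateral $w\,v_2\,v_1\,t_1$ degenerates), so you are stuck at the same spot the published argument glosses over; the difference is that your write-up asserts a false intermediate claim, whereas the paper's does not.
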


\begin{proof}

Let $w$ denote the top of the watchtower. Consider the quadrilateral $D$ formed by $w, v_2, v_1,$ and $t_1$. Since $w$ guards the edge $v_1v_2$, the interior angle at $v_2$ is at most $180^{\circ}$. An example is shown in \figref{wings}. Hence, the quadrilateral $D$ is convex, and the watchtower also sees the edge $t_1v_2$. Thus, $W$ is a valid watchtower for $T'$. The argument is symmetric for raising $v_n$ to $t_n$.
\qed
\end{proof}

\begin{figure}[h]
\centering
\includegraphics[width=0.3\textwidth]{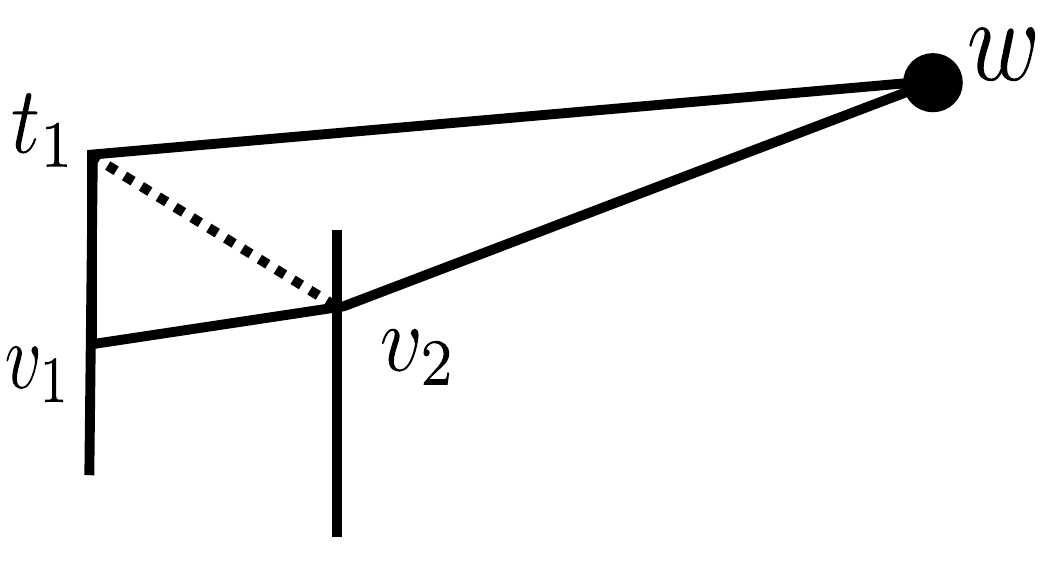}
\caption{The quadrilateral $D$ from the proof of \lemref{wings}. The dashed diagonal
is visible from the top of the watchtower.}
\label{fig:wings}
\end{figure}



Let $v_i=(x_i,y_i)$ and $v_j=(x_j,y_j)$ be two vertices of a realization with $x_i < x_j$. Consider any watchtower whose $x$-coordinate lies outside the interval $[x_i, x_j]$.

\begin{lemma}[Shorten]\label{lem:shorten}
For fixed $v_i$ and $v_j$ as described above. Locally shortening the subpath $v_i\rightarrow v_j$
does not increase the height of the watchtower.
\end{lemma}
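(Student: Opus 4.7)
The plan is to fix the watchtower $W = \overline{uw}$ and verify that it remains a valid watchtower, of the same height, for the modified terrain $T'$. Since the $x$-coordinate of $W$ lies outside $[x_i, x_j]$---say $x^* < x_i$ by symmetry---the lower endpoint $u$ sits on an edge of $T$ whose $x$-coordinates are all at most $x_i$. This edge is strictly outside the subpath $v_i \to v_j$ being modified, so $u$ lies on $T'$; the height $|uw|$ is therefore unchanged, and only visibility needs to be checked.

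For visibility, I would interpret ``locally shortening'' as an elementary move that shifts a single interior vertex $v_k$ within its interval $\ell_k$ so as to strictly decrease the length of the two-edge subpath $v_{k-1} v_k v_{k+1}$. Under such a move the new vertex $v_k'$ lies in the triangle $v_{k-1} v_k v_{k+1}$. The key claim is that this triangle is contained in the region visible from $w$: since $w$ already sees both edges $v_{k-1} v_k$ and $v_k v_{k+1}$, the two triangles $w v_{k-1} v_k$ and $w v_k v_{k+1}$ lie in $w$'s visibility region, and their union contains $v_{k-1} v_k v_{k+1}$. Consequently $v_k'$, together with the two new edges incident to it, is still seen by $w$. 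Iterating this argument handles any finite composition of elementary shortenings.

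The main obstacle I anticipate is verifying that $w$'s visibility of the triangle $v_{k-1} v_k v_{k+1}$ is not spoiled by the unchanged part of $T$: I would handle this by a case split on whether $v_k$ is left-turning or right-turning, using that $w$ lies outside the vertical strip $[x_i, x_j]$, so any blocker of the interior of the triangle would have already obstructed one of the two original visible edges---a contradiction. Additional care is needed to confirm that the new subpath does not itself block $w$'s view of parts of $T$ with $x$-coordinate $\ge x_j$, which should follow from the fact that the new subpath lies within the triangle region, hence below the straight segment $wv_j$ that was already an unobstructed sightline.
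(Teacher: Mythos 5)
There is a genuine gap: your central containment claim --- that the triangle $v_{k-1}v_kv_{k+1}$ is contained in the union of the triangles $wv_{k-1}v_k$ and $wv_kv_{k+1}$ --- is false in exactly the case that matters, namely when $v_k$ is right-turning (lies above the chord $v_{k-1}v_{k+1}$), which is when shortening moves $v_k$ \emph{down}. In that configuration the interior of $v_{k-1}v_kv_{k+1}$ lies on the far side of the two edges as seen from $w$, hence outside both visibility triangles (each triangle $we$ consists only of points on segments from $w$ to $e$, i.e.\ points \emph{in front of} $e$). Concretely, take $w=(-10,10)$, $v_{k-1}=(0,0)$, $v_k=(1,1)$, $v_{k+1}=(2,0)$: the point $(1,0.5)$ is in the triangle $v_{k-1}v_kv_{k+1}$ but lies below both edges and is separated from $v_k$ by the line through $w$ and $v_{k+1}$, so it belongs to neither $wv_{k-1}v_k$ nor $wv_kv_{k+1}$. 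Your anticipated fix (ruling out external blockers) does not repair this, because the failure is in the containment itself, not in occlusion by the rest of $T$.

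The paper's argument goes through the supporting lines rather than the visibility triangles: when $v_k$ is moved toward the chord, the slopes of the two incident edges move monotonically toward the slope of $v_{k-1}v_{k+1}$, so each new edge's line is obtained by rotating an old edge's line about $v_{k-1}$ (resp.\ $v_{k+1}$) toward the chord's line. Since $w$ lies outside the vertical strip $[x_i,x_j]$ and above the lines of both old edges, it is also above the line of the chord (the far edge's line dominates the chord's line on $w$'s side), hence above every intermediate line. Combined with the standard fact that a point guards an $x$-monotone chain if and only if it lies above the line of every edge --- which also disposes of your worry that the new subpath might occlude edges beyond $v_j$, since those edges' halfplanes are unchanged --- this yields the lemma. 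Your first paragraph (the base of $W$ is unaffected, so only visibility is at issue) is correct and matches the paper; it is the visibility step that needs the halfplane/slope argument rather than the triangle-union argument.
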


\begin{proof}

Since the endpoints $v_i$ and $v_j$ are fixed, shortening the path moves at least one inner vertex. Let $v_k$ denote such a vertex. The slope of the edge $v_{k-1}v_{k+1}$ lies between the slopes of $v_{k-1}v_k$ and $v_kv_{k+1}$. When we shorten the path, the slopes of $v_{k-1}v_k$ and $v_kv_{k+1}$ approach that of $v_{k-1}v_{k+1}$ (\figref{shorten}).  

Thus, if the edges $v_{k-1}v_k$ and $v_kv_{k+1}$ are visible from the watchtower before shortening, they remain visible after the path is shortened. By repeatedly applying this local shortening operation, any path with fixed endpoints can be transformed into a shortest path without increasing the watchtower height.
\qed
\end{proof}

\begin{figure}[h]
\centering
\includegraphics[width=0.3\textwidth]{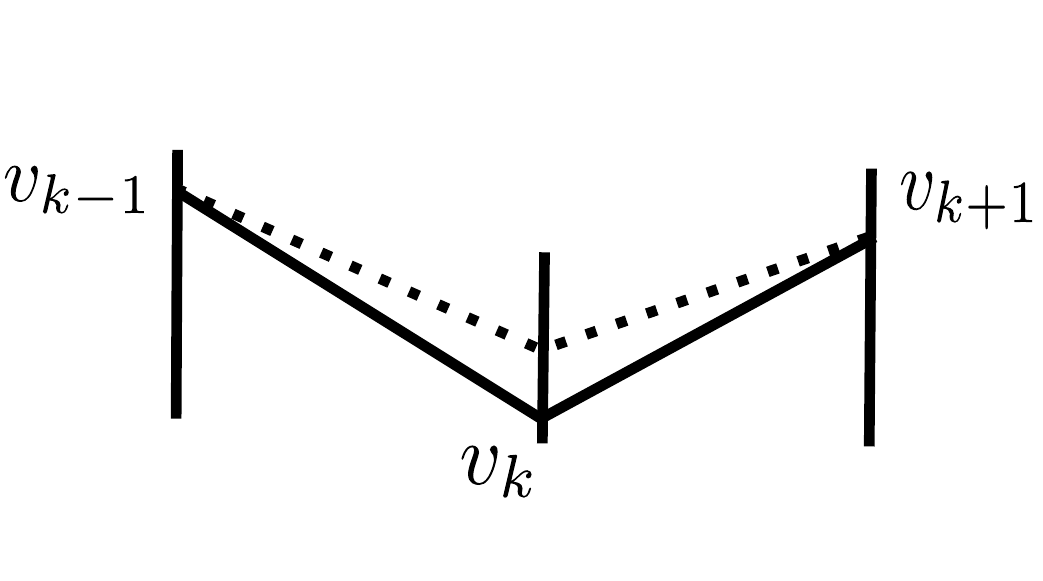}
\caption{The dashed line represents the shortening of the solid realization. Shortening a path with fixed endpoints does not increase
the height of a watchtower.}
\label{fig:shorten}
\end{figure}

The top of an optimal watchtower is determined by a vertex on $P$ or by an edge on $P$ and a vertex on $T.$ We now show that the edges (or edge) whose extensions determine the top of an optimal watchtower 
must belong to the realization given by a shortest path $\pi : t_1 \rightarrow t_n$.

 \begin{lemma}[Top of tower determined by $\pi$]\label{lem:half-planes-on-pi}
Let $\pi$ be a shortest path $\pi : t_1 \rightarrow t_n$ with non-zero watchtower height.
Let $T$ be any realization with optimal (shortest) watchtower height, and let $e_i$ and $e_j$ 
denote the edges whose extensions intersect at the top of the watchtower. 
Then $e_i$ and $e_j$ are edges of $\pi$.
\end{lemma}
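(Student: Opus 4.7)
The plan is to reduce an arbitrary optimal realization $T$ to a canonical form using the previous two lemmas, and then argue that the critical edges of this canonical form must coincide with edges of $\pi$. First, by \lemref{wings} I may assume $v_1 = t_1$ and $v_n = t_n$ without changing the watchtower. Writing $e_i = v_a v_{a+1}$ and $e_j = v_b v_{b+1}$ with $a+1 \leq b$, the top $w$ of the watchtower lies above the $x$-interval $[x_{a+1}, x_b]$, hence to the right of $v_{a+1}$ and to the left of $v_b$.

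Next, \lemref{shorten} applied with endpoints $(v_1, v_{a+1})$ on the left and $(v_b, v_n)$ on the right lets me shorten both outer subpaths without increasing the watchtower height; by the optimality of $T$ the height is unchanged. After these shortenings, the left subpath from $t_1$ to $v_{a+1}$ becomes the shortest polyline in $Q$ from $t_1$ to $v_{a+1}$ (and symmetrically on the right). In particular, $v_a$ lies on the final straight segment of the left subpath, so $e_i$ is contained in that segment.

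It remains to show that $v_{a+1}$ (and by symmetry $v_b$) lies on $\pi$. Suppose not: I would then replace the entire left prefix of $T$ with the corresponding prefix of $\pi$ ending at $\pi(x_{a+1})$, which is a valid realization since $\pi \subset Q$ implies $\pi(x_k) \in \ell_k$ for every $k$. This replacement changes the line through $e_i$ to the line of the $\pi$-edge containing $\pi(x_{a+1})$; because $\pi$ is the shortest path in $Q$, this new line is flatter in the appropriate sense and its value at $x_w$ is strictly smaller than the old $L_i(x_w)$, contradicting optimality. Hence $v_{a+1}$ lies on $\pi$, the final segment of the left subpath coincides with part of the $\pi$-edge through $v_{a+1}$, and $e_i$ is an edge of $\pi$; the symmetric argument handles $e_j$.

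The hard part will be the third step: I must verify both that the replacement strictly decreases $L_i(x_w)$ and that no other terrain line overtakes $L_i$ on the upper envelope at $x_w$ after the modification. This case analysis, exploiting the concavity-from-above structure of $\pi$ in $Q$, is the technical heart of the proof and is where the hypothesis of non-zero watchtower height is used to rule out degenerate configurations.
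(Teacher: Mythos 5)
Your overall strategy (normalize $T$ via \lemref{wings} and \lemref{shorten}, then run an exchange argument against $\pi$) is a legitimate alternative to the paper's route, but as written the proposal has a genuine gap at exactly the step that carries the entire burden of the lemma. You yourself flag it: you assert that replacing the left prefix of $T$ by the prefix of $\pi$ makes the new constraint line ``flatter in the appropriate sense'' with a strictly smaller value at $x_w$, and that no other edge's extension overtakes it on the upper envelope afterwards, but you verify neither claim. Neither is obvious. The shortest path in $Q$ from $t_1$ to $v_{a+1}$ and the prefix of the shortest path $t_1\to t_n$ are different objects with different right endpoints; $\pi(x_{a+1})$ may lie above or below $v_{a+1}$, so the replacement can raise the terrain near $\ell_{a+1}$, and a lowered $L_i(x_w)$ alone does not yield a shorter watchtower unless you also control the right-hand constraint $L_j$, the new position of the optimal apex, and the visibility of the modified prefix from that new apex. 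Announcing this case analysis as ``the technical heart of the proof'' and leaving it undone means the lemma is not proved. There is also a secondary issue: the lemma is stated for the edges $e_i,e_j$ of an \emph{arbitrary} optimal realization $T$, whereas your shortening step moves $v_a$ and hence changes $e_i$ before you ever compare it to $\pi$; you would need to argue either that the critical edges are preserved under these modifications or settle for the existential version of the claim.

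For comparison, the paper avoids the exchange argument entirely. It argues directly from optimality that the left critical edge $e_i$ must run from the \emph{top} of some interval $\ell_\alpha$ (where $T$ turns left) down to the \emph{bottom} of some interval $\ell_\beta$ (where $T$ turns right) with positive slope, and symmetrically for $e_j$; these turning directions and pinned endpoints are precisely the local characterization of edges of the taut shortest path $\pi$ in $Q$, so $e_i$ and $e_j$ are edges of $\pi$ with no global replacement needed. If you want to salvage your approach, the cleanest fix is to replace your step three with this local characterization rather than attempting the envelope bookkeeping.
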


\begin{proof}

Since $\pi$ has non-zero watchtower height, it is not a straight line, and the realization cannot be shifted vertically.  
Suppose $T$ is an optimal realization, with $e_i$ lying to the left of the watchtower and $e_j$ to the right.

Let $e_i$ be determined by intervals $\ell_\alpha$ and $\ell_\beta$.  
Then $T$ makes a left turn at $\ell_\alpha$ and a right turn at $\ell_\beta$ (or $e_i$ is a subedge of a straight line from $\ell_\alpha$ to $\ell_\beta$).  
Since $T$ minimizes the watchtower height, $e_i$ must have positive slope, with $\ell_\alpha$ maximized and $\ell_\beta$ minimized.  

Similarly, let $e_j$ be determined by intervals $\ell_\gamma$ and $\ell_\delta$.  
Then $T$ makes a right turn at $\ell_\gamma$ and a left turn at $\ell_\delta$ (or $e_j$ is a subedge of a straight line from $\ell_\gamma$ to $\ell_\delta$).  
In this case, $e_j$ has negative slope, with $\ell_\gamma$ minimized and $\ell_\delta$ maximized.  

Thus, $e_i$ is an edge from the top of an interval to the bottom of another with positive slope, 
and $e_j$ is an edge running from the bottom of an interval to the top of another with negative slope.  
Since $T$ makes right turns at both $\ell_\beta$ and $\ell_\gamma$, 
it follows that $e_i$ and $e_j$ are edges of the shortest path $\pi$. 
\qed
\end{proof}

The case where the top of the watchtower is determined by a single edge $e_i$ 
is handled analogously: the proof of Lemma~\ref{lem:half-planes-on-pi} still applies.

\subsection{Discrete Case}
We now consider the discrete case in which the base of the watchtower is located at a vertex of the terrain. 
In this case, raising the vertex that contains the base does not increase the height of the watchtower.

\begin{lemma}[Raising the base]\label{lem:base}
Let $T$ be a realization and let $W$ be a watchtower whose base lies on a vertex $v_i$.
Let $T'$ be the realization obtained from $T$ by raising $v_i$ to the top of its interval $t_i$.
Let $W'$ be the watchtower with the same top as $W$ and base at $t_i$.
Then $W'$ is a valid watchtower for $T'$.
\end{lemma}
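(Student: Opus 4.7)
The plan is to show that the apex $w$ of $W$ still guards every point of the modified terrain $T'$; combined with the fact that $w$ lies on the vertical line through $v_i$ (hence through $t_i$), this makes $\overline{t_iw}$ a valid watchtower for $T'$, under the implicit assumption (which must be guaranteed by the ambient argument) that $w$ lies at or above $t_i$. I would split the verification into two cases depending on whether the target point on $T'$ lies on one of the two edges modified by the lift $v_i \to t_i$ or on an edge shared with $T$.

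For the two new edges $v_{i-1}t_i$ and $t_iv_{i+1}$, I would mimic the triangle argument used in the \emph{Raise Wings} lemma. Since $W$ guards $T$, the triangle $\triangle wv_{i-1}v_i$ lies above $T$ throughout the strip $[x_{i-1},x_i]$. Because $W$ is vertical with base $v_i$, the point $t_i$ lies on the edge $\overline{v_iw}$ of this triangle, so $v_{i-1}t_i$ is a chord of it. The upper subtriangle $\triangle wv_{i-1}t_i$ sits entirely above this chord, and in $T'$ the terrain over $[x_{i-1},x_i]$ is precisely $v_{i-1}t_i$; hence every segment from $w$ to a point of $v_{i-1}t_i$ lies above $T'$, and $w$ sees $v_{i-1}t_i$. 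The argument for $t_iv_{i+1}$ is symmetric using $\triangle wv_iv_{i+1}$.

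For an unchanged edge, I would argue by linearity. Fix a point $p$ on an edge of $T'$ not incident to $t_i$; without loss of generality $x_p \leq x_{i-1}$. Outside the strip $[x_{i-1},x_i]$ the two terrains agree, so $wp$ stays above $T'$ there automatically. Inside the strip, both $wp$ and the replacement edge $v_{i-1}t_i$ are affine in $x$. At $x=x_{i-1}$, the value of $wp$ is at least $y_{v_{i-1}}$, because $wp$ lay above $T$ and the new edge $v_{i-1}t_i$ passes through $v_{i-1}$ at this $x$. At $x=x_i$, the segment $wp$ reaches $w$, which lies at or above $t_i$. Dominance of one affine function by another at both endpoints of an interval implies dominance throughout, so $wp$ stays above $v_{i-1}t_i$ on the whole strip. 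The case $x_p \geq x_{i+1}$ is symmetric.

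The main obstacle is the bookkeeping around ensuring $w$ is at or above $t_i$, without which $W'$ is not a well-formed watchtower segment; this must be secured by the context in which the lemma is invoked. Modulo that hypothesis, the proof is a clean local argument combining a chord-inside-a-triangle step with a linearity-on-a-strip step, both in the same spirit as \lemref{wings} and \lemref{shorten}.
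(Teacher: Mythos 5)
Your proposal is correct and follows the same route as the paper's (much terser) proof: your chord-in-a-triangle step for the two modified edges and your endpoint-dominance/linearity step for the unchanged edges make explicit exactly what the paper compresses into ``the adjacent edges remain visible because the terrain is $x$-monotone'' and ``all other vertices of $T$ are unchanged.'' Your caveat that $w$ must lie at or above $t_i$ is a fair point the paper leaves implicit --- if $t_i$ were above $w$ the segment $\overline{t_iw}$ degenerates and that vertex simply contributes a zero-height candidate --- but it does not affect the substance of the argument.
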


\begin{proof}
When the base of the watchtower is at a vertex, the adjacent edges remain visible because the terrain is $x$-monotone.  
Since all other vertices of $T$ are unchanged, $W'$ continues to be a valid watchtower for $T'$.
\qed
\end{proof}

Therefore, in the discrete case, we may restrict our attention to at most $n$ realizations without increasing the height of the optimal watchtower.  
Specifically, we first compute a shortest-path realization from the top of $\ell_1$ to the top of $\ell_n$, 
and then, for each vertex in turn, raise it to the top of its interval while keeping all other vertices fixed.  
The resulting watchtower height is the vertical distance from the top of each interval to the visibility polygon of $\pi$, denoted by $P$.
This gives the following algorithm for the discrete case.
 \begin{algorithm}[h]
   \caption{Discrete Imprecise Guarding}\label{alg:base-on-interval}
    \begin{algorithmic}[1]

        \Require An imprecise terrain
        \Ensure Shortest discrete watchtower height

	\State Compute the visibility polygon of $\pi$, $P$\label{line:p-vis}
        \State Let $h=\infty$
        \For{Each interval $\ell_i$}\label{line:single-up}
        \State Compute $h'$ the vertical distance from the vertex at the top of the interval to $P$
	\If{$h'<h$}
	\State $h=h'$
	\EndIf
        \EndFor
      \State  \Return $h$
        
    \end{algorithmic}
\end{algorithm}

\begin{theorem}[Discrete Watchtower]\label{thm:tower-abover-vertex}
In the discrete case, the optimistic shortest watchtower can be computed in $O(n)$ time.
\end{theorem}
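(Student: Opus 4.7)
The plan is to verify both the correctness and the linear running time of Algorithm~\ref{alg:base-on-interval}.

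For correctness, I would glue together the lemmas already proved. Lemma~\ref{lem:half-planes-on-pi} tells us that in any optimal realization $T^\ast$ the edges whose extensions determine the top $w$ of the watchtower are edges of the shortest-path realization $\pi$ from $t_1$ to $t_n$; consequently $w$ lies on the visibility polygon $P$ of $\pi$. Lemma~\ref{lem:base} then lets us raise the vertex supporting the base of the watchtower to the top of its interval without increasing the height, and Lemma~\ref{lem:wings} rules out any issue at the two wing intervals. Combining these, one may restrict attention to realizations in which the base is some $t_i$ and the top is the point directly above $t_i$ on $\partial P$. Algorithm~\ref{alg:base-on-interval} enumerates exactly these $n$ candidates, so the minimum it returns is optimal.

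For the running time, I would check each ingredient in turn. The auxiliary simple polygon $Q$ from the preliminaries has $2n$ vertices and is constructed in $O(n)$ time, and the shortest path $\pi$ from $t_1$ to $t_n$ in a simple polygon is computed in linear time via \cite{guibas_linear-time_1987}. Once $\pi$ is fixed, its visibility polygon $P$ is computed in $O(n)$ time by the Lee--Preparata algorithm \cite{lee_optimal_1979}. Finally, because the vertices of $P$ and the top endpoints $t_1,\ldots,t_n$ are both available in $x$-sorted order, a single left-to-right merge computes the vertical distance from each $t_i$ to $\partial P$ in $O(n)$ total time. Returning the minimum then gives the claimed $O(n)$ bound.

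The main obstacle I anticipate is the degenerate case excluded in Lemma~\ref{lem:half-planes-on-pi}, where the optimal watchtower height is zero and the supporting edges are no longer forced to lie on $\pi$. I would handle this upfront by first checking whether $\pi$ is a straight segment or, equivalently, whether some $t_i$ already lies on or above $P$; in either case the algorithm may report height $0$ directly. Once this case is separated, the three lemmas combine cleanly to certify that the $n$ candidate realizations inspected by the algorithm contain an optimum, and the sweep-based implementation delivers the overall linear complexity.
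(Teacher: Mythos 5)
Your proof follows essentially the same route as the paper's: combine Lemmas~\ref{lem:wings}, \ref{lem:shorten}/\ref{lem:half-planes-on-pi}, and \ref{lem:base} to reduce to the $n$ candidate realizations enumerated by Algorithm~\ref{alg:base-on-interval}, then use Lee--Preparata to build the convex visibility polygon $P$ of $\pi$ and a single left-to-right scan to get the $O(n)$ bound. Your explicit separation of the zero-height degenerate case (which Lemma~\ref{lem:half-planes-on-pi} excludes by hypothesis) is a small point the paper's own proof glosses over, but otherwise the two arguments coincide.
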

\begin{proof}

By \lemref{wings}, we may assume that $t_1$ and $t_n$ are vertices of an optimal realization.  
By \lemref{shorten} and \lemref{half-planes-on-pi}, the top of an optimal watchtower is determined by the extensions of edges along the shortest path $\pi : t_1 \to t_n$.  
Furthermore, \lemref{base} implies that the optimal watchtower height is equal to the minimum vertical distance from the top of an interval to the visibility polygon $P$ of $\pi$ and \algref{base-on-interval} is correct.

The polygon $P$ can be computed in $O(n)$ time using the algorithm of Lee and Preparata \cite{lee_optimal_1979}, as shown in \lnref{p-vis} of \algref{base-on-interval}.  
Since there are $n$ intervals to consider (see \lnref{single-up}), we evaluate the distance from each interval top to $P$.  
Since $P$ is the intersection of half-planes, it is convex, and its $O(n)$ vertices can be ordered from left to right in $O(n)$ time.  
Then, scanning $P$ from left to right allows us to compute the distance from the top of each interval to $P$ in constant time per interval.  
Thus, the overall computation takes $O(n)$ time.
\qed
\end{proof}

\subsection{Continuous Case}

We now consider the case where the base of the watchtower lies in the interior of an edge.  
Unlike the discrete case, raising the edge containing the base as high as possible does not necessarily minimize the watchtower height: doing so may steepen adjacent edges and actually increase the height of the watchtower (\figref{too-much}).

\begin{figure}[h]
\centering
\includegraphics[width=0.32\textwidth]{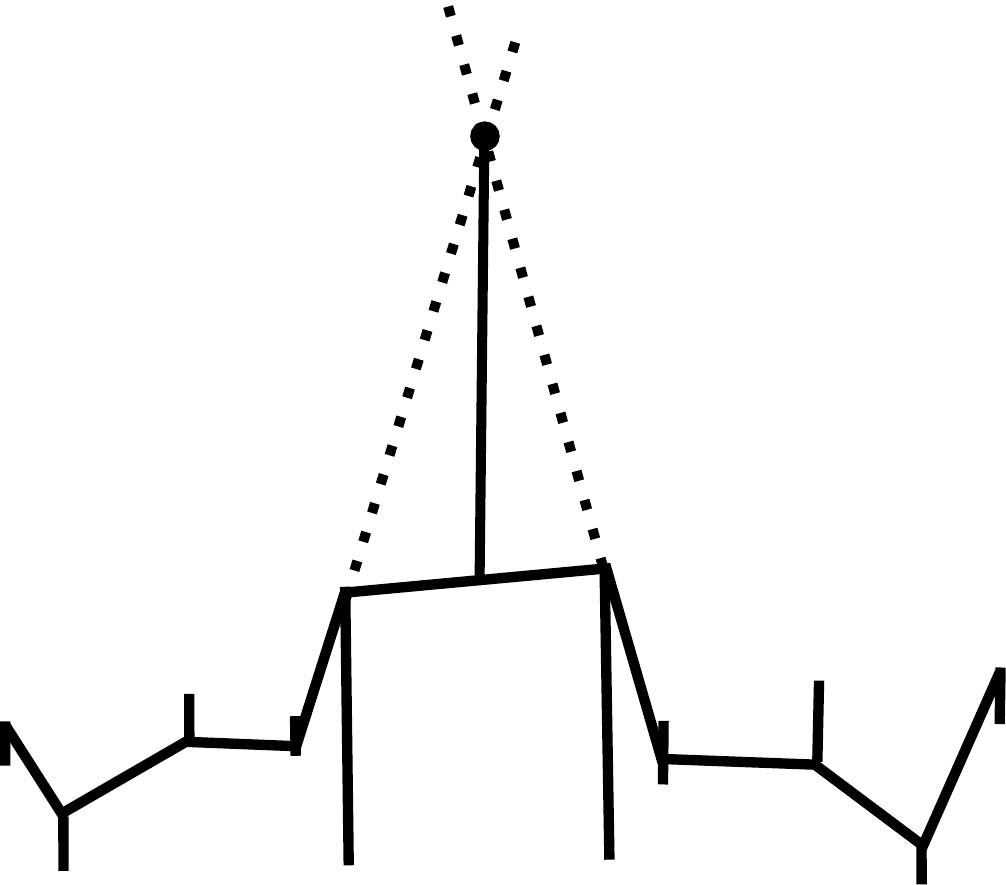}
\caption{A realization where raising the edge containing the base of the watchtower too much increases the watchtower height.}
\label{fig:too-much}
\end{figure}

Since the base lies on an edge, we may assume the top of the watchtower is determined by the intersection of two halfplanes formed by the extensions of edges $e_i$ and $e_j$.  
We first compute $\pi$, then consider raising the edge containing the base, but without raising it excessively.

Let $p$ denote the point on the visibility polygon $P$ of $\pi$ obtained as the intersection of the extended edges $e_i$ and $e_j$ (\figref{too-much-p}).  
The $x$-coordinate of $p$ lies between two consecutive intervals, denoted $\ell_{k}$ and $\ell_{k+1}$.
Recall, $Q$ is the simple polygon formed by $\ell_1$, the chain formed by connecting the consecutive tops of the intervals,
the chain formed by connecting the consecutive bottoms of the intervals, and $\ell_n.$
Define $Q_p$ as the polygon obtained by taking the union of $Q$ with the triangle $\triangle t_{k}pt_{k+1}$ (\figref{poly-p}).  

Let $\rho_1$ denote the shortest path from $t_1$ to $p$ in $Q_p$, and let $\rho_2$ denote the shortest path from $p$ to $t_n$ in $Q_p$ (\figref{too-much-rho}).  
Let $u_1$ be the rightmost vertex at which $\pi$ and $\rho_1$ coincide, and let $u_2$ be the leftmost vertex such that $\pi$ and $\rho_2$ coincide for all larger $x$-values.

\begin{figure}[t]
    \captionsetup[subfigure]{justification=centering}
    \centering
    \begin{subfigure}[t]{0.44\textwidth}
    	\includegraphics[width=\textwidth]{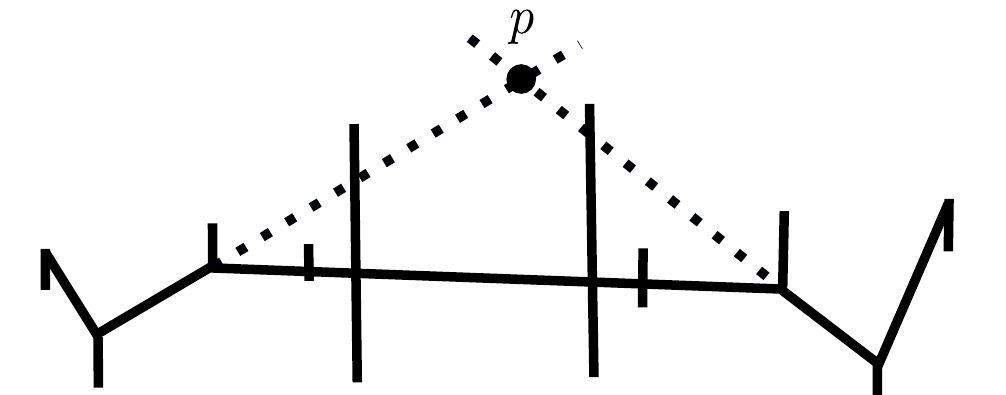}
	    \subcaption{}
    \label{fig:too-much-p}
    \end{subfigure}
    \hspace{.025cm}
    \begin{subfigure}[t]{0.47\textwidth}
  	    \includegraphics[width=\textwidth]{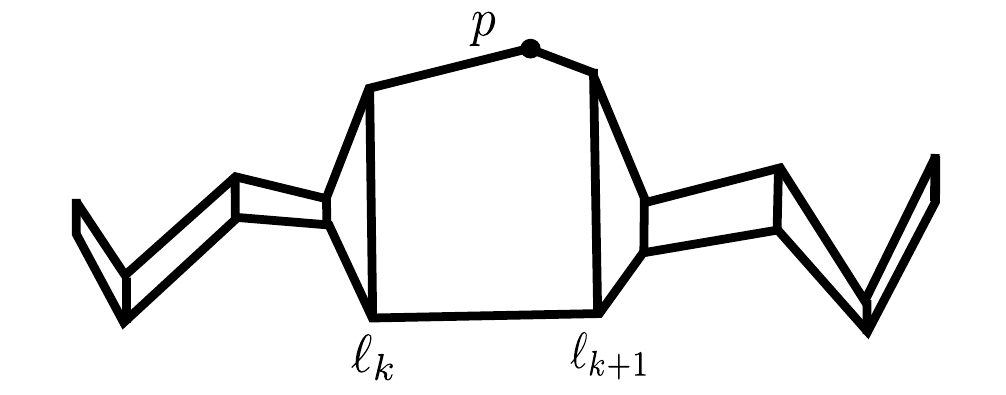}
	    \subcaption{}
    \label{fig:poly-p}
    \end{subfigure}

	\caption{(\subref{fig:too-much-p}) The intersection of extended edges determine the point $p$. (\subref{fig:poly-p}) The polygon $Q_p$.}
\label{fig:intro}
\end{figure}

\begin{figure}[t]
    \captionsetup[subfigure]{justification=centering}
    \centering
    \begin{subfigure}[t]{0.43\textwidth}
    	\includegraphics[width=\textwidth]{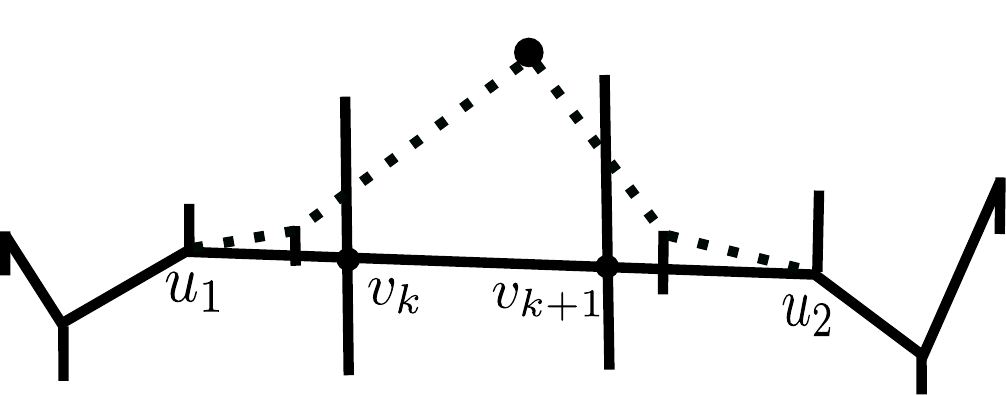}
	    \subcaption{}
    \label{fig:too-much-rho}
    \end{subfigure}
    \hspace{.25cm}
    \begin{subfigure}[t]{0.43\textwidth}
  	    \includegraphics[width=\textwidth]{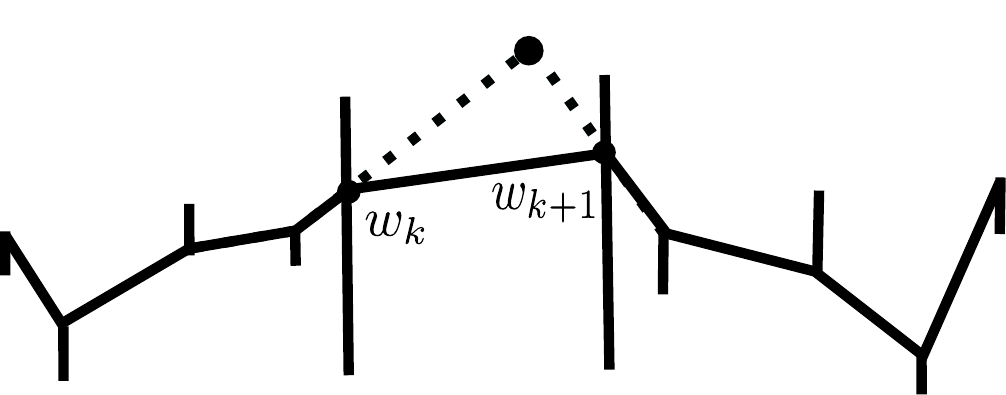}
	    \subcaption{}
    \label{fig:too-much-updated}
    \end{subfigure}

	\caption{(\subref{fig:too-much-rho}) The dashed lines indicate $\rho_1$ and $\rho_2$. (\subref{fig:too-much-updated}) We raise the edge $v_kv_{k+1}$ to $w_kw_{k+1}$ the intersections of the shortest paths in (\subref{fig:too-much-rho}) with $\ell_k$ and $\ell_{k+1}.$ This does not increase the height of the watchtower.}
\label{fig:intro}
\end{figure}

 \begin{lemma}[Convex $u_1$ to $p$]\label{lem:raise-to-p}
The subpath of $\rho_1$ from $u_1\rightarrow p$ only makes left turns.
\end{lemma}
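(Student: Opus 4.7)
The plan is to show that the subpath $\sigma$ of $\rho_1$ from $u_1$ to $p$ has the property that every turn at an inner vertex is a left turn. I would proceed in three steps.

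First, I would classify the reflex vertices of $Q_p$, since these are the only vertices at which a shortest path can bend. They fall into three types: (i) valleys of the top chain of $Q$, whose exterior pockets open upward; (ii) peaks of the bottom chain, whose exterior pockets open downward; and (iii) possibly the vertices $t_k$ and $t_{k+1}$ of the attached triangle $\triangle t_k p t_{k+1}$, which can become reflex in $Q_p$ because their new neighbor $p$ lies strictly above them. The exterior pocket at a category-(iii) vertex also opens upward.

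Second, I would argue that $\sigma$ never bends at a category-(ii) vertex. The paths $\rho_1$ and $\pi$ are branches of the shortest path tree in $Q_p$ rooted at $t_1$ that diverge at $u_1$; after $u_1$, $\rho_1$ heads to $p$ (which lies above the top chain) while $\pi$ continues along the top chain to $t_n$. By the non-crossing property of shortest paths sharing a common source, $\sigma$ lies on the upper side of $\pi$ past $u_1$. Since $\pi$ itself lies above the bottom chain, $\sigma$ never reaches a bottom-chain peak, so its inner vertices are all of category (i) or (iii).

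Finally, at each such inner vertex $v$ of $\sigma$, the exterior pocket of $Q_p$ opens upward, so $v$ is a local minimum of the upper boundary of $Q_p$ in a neighborhood of $v$. The taut string $\sigma$ therefore touches $v$ from below, meaning that reading the path left to right the string descends into $v$ and ascends out of $v$---which is a left turn by definition.

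The main obstacle will be the formal justification in step two. I would handle it via the non-crossing property of shortest paths with a common source, itself provable by a standard tail-swapping argument: if $\rho_1$ and $\pi$ were to cross transversally at some point $q$ past $u_1$, swapping their tails at $q$ would yield two new $t_1$-rooted paths to $p$ and to $t_n$ of the same total length, and the triangle inequality applied at $q$ would strictly shorten one of them, contradicting the optimality of either $\rho_1$ or $\pi$.
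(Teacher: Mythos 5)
Your proof is correct, but it is a genuinely different and considerably more detailed argument than the one in the paper. The paper disposes of the lemma in one sentence: a right turn on $\rho_1$ could be locally shortened, contradicting minimality. Read literally that proves too much---a shortest path in a polygon does make right turns wherever it wraps over a reflex vertex of the lower chain, and such turns are not locally shortenable---so the paper is implicitly relying on the fact that the portion of $\rho_1$ past $u_1$ is never obstructed from below. That is exactly the content you supply: you classify the reflex vertices of $Q_p$ by which side their exterior pocket lies on, use the non-crossing of the two $t_1$-rooted shortest paths $\rho_1$ and $\pi$ to pin $\sigma$ strictly above $\pi$ (hence strictly above the bottom chain) after the divergence vertex $u_1$, and conclude that every bend of $\sigma$ is against the upper boundary and therefore a left turn. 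Two small points you should make explicit. First, invoking the shortest-path tree of $Q_p$ rooted at $t_1$ requires that $\pi$, defined as the shortest $t_1\to t_n$ path in $Q$, is still shortest in $Q_p$; this is true because any path between points of $Q$ that enters the glued triangle must cross the chord $t_kt_{k+1}$ twice and can be replaced by the chord segment without increasing length, but it deserves a line. Second, your phrase that $\pi$ ``continues along the top chain'' is inaccurate ($\pi$ is a taut path, not the top chain) and your ``descends into $v$ and ascends out of $v$'' holds only relative to the chord joining the neighboring anchors; neither slip affects the conclusion, since what matters is that the inner vertex lies below that chord, which is the definition of a left turn for an $x$-monotone path. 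What your route buys is a proof that actually closes the gap the paper's one-liner leaves open; what it costs is reliance on the funnel/non-crossing machinery and on the $x$-monotonicity of $Q_p$, both standard but worth stating.
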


\begin{proof}
If the path $\rho_1$ were to make a right turn, then it could be shortened, contradicting minimality. 
\qed
\end{proof}

Let $w_k$ and $w_{k+1}$ denote the points where $\rho_1$ and $\rho_2$ intersect $\ell_k$ and $\ell_{k+1}$, respectively.  
Define $\rho$ as the realization consisting of $\rho_1$ from $t_1$ to $w_k$, followed by the edge $w_kw_{k+1}$, and then $\rho_2$ from $w_{k+1}$ to $t_n$ (\figref{too-much-updated}).  
By \lemref{raise-to-p}, the watchtower height of $\rho$ is no greater than that of $\pi$.  

Now let $T$ be any realization with the watchtower base lying in the interior of an edge.  
We show that $T$ can be transformed into $O(n)$ realizations without increasing the watchtower height.  
Suppose the base of the watchtower lies on edge $e = v_{k}v_{k+1}$ of $T$.  
Let $T'$ denote the realization obtained from $T$ by raising $v_1$ and $v_n$ to the tops of their respective intervals and shortening the paths from $v_k \rightarrow t_1$ and from $v_{k+1} \rightarrow t_n$.  
By \lemref{wings} and \lemref{shorten}, the watchtower height of $T'$ is no greater than that of $T$.  

\begin{lemma}[Position Base Edge]\label{lem:position-base}
Moving $v_k$ to $w_k$ and $v_{k+1}$ to $w_{k+1}$ does not increase the height of a watchtower of $T$.
\end{lemma}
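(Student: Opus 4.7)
The plan is to construct a watchtower of $T''$, the realization obtained from $T'$ by moving $v_k$ to $w_k$ and $v_{k+1}$ to $w_{k+1}$, whose height is no greater than that of $T'$'s watchtower. I would keep the top of $T'$'s watchtower, call it $w$, and argue that (i) $w$ still sees all of $T''$ and (ii) the new base, directly below $w$ on $T''$, is no lower than the old base on $v_k v_{k+1}$.

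For visibility (i), the subpaths of $T''$ outside $[x_k, x_{k+1}]$ are unchanged from $T'$ and thus remain visible from $w$. The three modified edges require care. For the base edge $w_k w_{k+1}$: since $w_k$ and $w_{k+1}$ are the intersections of $\rho_1$ and $\rho_2$ with $\ell_k$ and $\ell_{k+1}$, and the final segments $p w_k$ and $p w_{k+1}$ of these shortest paths lie in the triangle $\triangle t_k p t_{k+1}$, the chord $w_k w_{k+1}$ sits inside this triangle with unobstructed line of sight to $p$ (and hence to $w$, which coincides with $p$ under the assumption of this section). For the adjacent edges $v_{k-1} w_k$ and $w_{k+1} v_{k+2}$, I would run a convex-quadrilateral argument in the spirit of \lemref{wings}: the quadrilateral $w, v_{k-2}, v_{k-1}, w_k$ should be shown convex, so that the diagonal $v_{k-1} w_k$ is visible from $w$; convexity at $v_{k-1}$ is controlled by the slope of the incoming edge of $\rho_1$ at $w_k$, which by \lemref{raise-to-p} (left turns only from $u_1$ to $p$) is bounded so that the interior angle at $v_{k-1}$ stays at most $180^\circ$. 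A symmetric argument handles the right side.

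For (ii), because $w_k$ and $w_{k+1}$ are obtained by tracing the shortest paths $\rho_1, \rho_2$ that climb into the triangle above the top chain, the new base edge $w_k w_{k+1}$ lies at or above the old edge $v_k v_{k+1}$ at the $x$-coordinate of $w$. Hence the new base is no lower than the old one, and since the top $w$ is unchanged, the watchtower height does not increase.

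The main obstacle is the convex-quadrilateral step for the edges $v_{k-1} w_k$ and $w_{k+1} v_{k+2}$: raising $v_k$ to $w_k$ could in principle steepen the edge $v_{k-1} w_k$ enough to introduce a reflex angle at $v_{k-1}$, thereby blocking $w$'s view of the portion of $T''$ further to the left. Ruling this out is precisely where the specific choice of $w_k$ as the intersection of $\rho_1$ with $\ell_k$ matters: the shortest-path convexity of $\rho_1$ supplied by \lemref{raise-to-p} bounds the slope of $v_{k-1} w_k$ and guarantees that no such obstruction arises.
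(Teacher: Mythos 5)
Your overall skeleton --- keep the top $p$ fixed, show the base edge rises, and show $p$ still sees the modified realization --- has the right shape, and your treatment of the base edge itself and of the unchanged portions is fine. But the step you yourself flag as ``the main obstacle'' is not actually closed, and it is exactly where the paper's proof does its real work. You try to control the new edge $v_{k-1}w_k$ by applying Lemma~\ref{lem:raise-to-p} to ``the incoming edge of $\rho_1$ at $w_k$,'' but that lemma only constrains the turns of $\rho_1$, and $v_{k-1}$ is a vertex of $T'$, not of $\rho_1$; nothing in your argument relates the slope of $v_{k-1}w_k$ to the slopes of $\rho_1$'s edges, so the claimed bound on the interior angle at $v_{k-1}$ does not follow. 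Likewise, your claim in (ii) that $w_kw_{k+1}$ lies at or above $v_kv_{k+1}$ is asserted (``the shortest paths climb into the triangle'') rather than proved.

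The missing idea, which the paper supplies, is to compare the two taut paths that both end at $t_1$: the left half of $T'$ (a shortest path from $v_k$ to $t_1$, by the preceding shortening step) and $\rho_1$ (the shortest path from $p$ to $t_1$ in $Q_p$). Being shortest paths to the same endpoint, they coincide from $t_1$ up to a rightmost common point $u'$ and diverge only afterward, with $T'$ below $\rho_1$; this is what yields both that $v_k$ lies below $w_k$ and that the edge $e_i$ of $T'$ whose extension determines $p$ has $x$-value less than $u'$, i.e., lies on the portion untouched by the move. Combined with Lemma~\ref{lem:raise-to-p} (the subpath of $\rho_1$ from $u'$ onward makes only left turns, so its edges extend below $p$), this is what guarantees $p$ survives as a valid top after raising $v_k$ to $w_k$, with the symmetric argument on the right. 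Without the coincidence-up-to-$u'$ observation, your convex-quadrilateral step at $v_{k-1}$ has no purchase, so as written the proposal has a genuine gap.
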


\begin{proof}
We first consider moving $v_k$ to $w_k$; the argument for $v_{k+1}$ is symmetric.  
Let $u'$ be the rightmost point where $T'$ and $\rho_1$ coincide.  
Since both $T'$ and $\rho_1$ are determined by shortest paths ending at $t_1$, the edge of $T'$ whose extension determines $p$ has $x$-value less than $u'$, and the $y$-coordinate of $v_k$ is less than that of $w_k$.  
Thus, raising $v_k$ to $w_k$ does not increase the watchtower height because, by \lemref{raise-to-p}, the path $u' \rightarrow w_k$ only makes left turns. 
\qed
\end{proof}


In the continuous case, we can morph any realization into one of $n-1$ canonical realizations without increasing the height of the optimal watchtower.  
Specifically, for each vertex $p$ of the polygon $P$ associated with $\pi$,  
we compute the shortest paths from $p$ to $t_1$ and $t_n$ in $Q_p$, determine where these paths intersect the intervals immediately to the left and right of $p$, and construct the corresponding realization.  
Since both $p$ and the edge containing the base are known, the watchtower height can then be computed in constant time.

\begin{restatable}[Continuous Watchtower]{theorem}{continuouswatchtower}
\label{hm:tower-abover-edge}
In the continuous case,
the optimistic shortest watchtower can be computed in \run\  time.
\end{restatable}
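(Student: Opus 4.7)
The overall plan is to show that, after $O(n)$ preprocessing, at most $n-1$ canonical continuous realizations need to be evaluated, each in amortized $O(1)$ time.

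First, I would chain together the structural lemmas already established. By \lemref{wings}, \lemref{shorten}, \lemref{half-planes-on-pi}, and \lemref{position-base}, any optimal continuous realization may be assumed to satisfy $v_1=t_1$ and $v_n=t_n$, to consist of shortest-path subpaths on either side of the base edge, to have its watchtower apex at a vertex $p$ of the visibility polygon $P$ of $\pi$, and to use as its base edge the raised segment $w_k w_{k+1}$, where $w_k$ and $w_{k+1}$ are the intersections of $\rho_1$ and $\rho_2$ with the intervals $\ell_k$ and $\ell_{k+1}$ immediately flanking $p$. This reduces the search to at most one realization per vertex of $P$.

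Next I would do the preprocessing: compute $\pi$ as a shortest path in $Q$ using the linear-time algorithm of Guibas et al.\ \cite{guibas_linear-time_1987} in $O(n)$ time, and then compute $P$ in $O(n)$ time via Lee and Preparata \cite{lee_optimal_1979}. The edges of $\pi$ and the vertices of $P$ both come out in sorted $x$-order, and there are $O(n)$ of each.

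The main algorithmic step is, for each vertex $p$ of $P$, to extract $w_k$ and $w_{k+1}$ so that the candidate watchtower height --- the vertical distance from $p$ to the segment $w_k w_{k+1}$ --- can be read off in $O(1)$. To avoid spending $\Theta(n)$ per $p$, I would sweep the vertices of $P$ from left to right, maintaining the funnel representing $\rho_1$ in $Q_p$ with apex $p$, and symmetrically, by a right-to-left sweep, the funnel for $\rho_2$. As $p$ advances from one vertex of $P$ to the next, each funnel updates by appending the newly exposed top-chain vertex and trimming any tail vertices that fall outside its convex boundary --- exactly the update pattern used in the standard linear-time funnel maintenance for shortest-path trees in simple polygons. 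The final answer is then the smaller of the continuous-case minimum and the discrete-case value from \thmref{tower-abover-vertex}.

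The main obstacle I expect is making the amortization rigorous: showing that across the entire sweep each reflex vertex of the top chain of $Q$ is appended to and popped from the maintained funnels at most once, so the total funnel work is $O(n)$. \lemref{raise-to-p}, which forces $\rho_1$ to turn only leftward near $p$, together with the fact that the vertices of $P$ themselves appear in left-to-right order along the top boundary of the visibility polygon of $\pi$, should be the key geometric ingredients driving this monotonicity.
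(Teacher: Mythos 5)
Your proposal follows the paper's proof almost exactly in its correctness argument: both chain \lemref{wings}, \lemref{shorten}, \lemref{half-planes-on-pi}, and \lemref{position-base} to reduce the search to one canonical realization per vertex $p$ of $P$, with the candidate height read off in $O(1)$ as the vertical distance from $p$ to the raised base edge $w_kw_{k+1}$, and both take the minimum against the discrete answer. The only genuine divergence is in how the $O(n)$ paths $\rho_1,\rho_2$ are produced within the time bound. The paper constructs a single augmented polygon $\hat{Q}$ whose upper chain interleaves the interval tops with all vertices of $P$ in $x$-order, runs the linear-time shortest-path-tree algorithm of \cite{guibas_linear-time_1987} once from $t_1$ and once from $t_n$, and reads $w_k$ and $w_{k+1}$ off the last edge of each tree path in constant time. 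You instead propose sweeping the vertices of $P$ while maintaining the two funnels explicitly over the family of polygons $Q_p$ and proving the push/pop amortization by hand. That route is viable, but it amounts to re-deriving the funnel argument that is already packaged inside the cited black box, and the amortization you flag as your main remaining obstacle is precisely what the $\hat{Q}$ construction lets you avoid: a single shortest-path tree in one fixed simple polygon delivers all $O(n)$ paths at once. Note that either way one must check that the paths computed in the merged setting (your sweep over successive $Q_p$'s, or the paper's $\hat{Q}$) agree with the individual shortest paths in each $Q_p$; the convexity of $P$ and the $x$-ordering of its vertices are the facts to invoke there, just as you anticipate with \lemref{raise-to-p}.
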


\begin{proof}
As in the discrete case, by \lemref{wings} we assume that $t_1$ and $t_n$ are vertices of an optimal realization. By \lemref{shorten} and \lemref{half-planes-on-pi} the top of an optimal watchtower is determined by extensions of edges in $\pi.$
And \lemref{position-base} implies the base of the watchtower is raised as much as possible. Therefore, \algref{base-on-edge} is correct. 

We next justify the $O(n)$ run time.
There are $O(n)$ vertices of $P.$  
We define the simple polygon $\hat{Q}$ as follows.  
The upper chain consists of the tops of the intervals of the imprecise terrain together with the vertices of $P$, sorted by $x$-coordinate, with edges added between consecutive points in this order.  
The sides of $\hat{Q}$ are given by $\ell_1$ and $\ell_n$, while the bottom chain is formed by the bottoms of the intervals of the imprecise terrain.  

Using $\hat{Q}$, we compute shortest paths from $t_1$ and $t_n$ to all points of $P$ in linear time  
\cite{guibas_linear-time_1987}.  
The intersections of these paths with the adjacent intervals (\lnref{find-points} of \algref{base-on-edge}) are determined in constant time, since they are defined by the final edge of the path.  
Finally, in \lnref{tower-check} of \algref{base-on-edge}, the watchtower height is computed as the vertical distance from the base edge to $p$ in constant time.  
\qed
\end{proof}

 \begin{algorithm}[h]
   \caption{Continuous Imprecise Guarding}\label{alg:base-on-edge}
    \begin{algorithmic}[1]

        \Require An imprecise terrain
        \Ensure The height of the shortest continuous watchtower
         \State Compute visibility polygon of $\pi$, $P$
         \State Let $h$ equal the shortest watchtower height of the terrain $\pi$
         \State Compute the last edge of the shortest paths from $t_1$ and $t_n$ to vertices in $P$\label{line:shorties}
        \For{Each vertex $p$ in $P$ }\label{line:k-vertices}
        \State Let $\pi_k$ and $\pi_{k+1}$ denote the shortest paths from $t_1$
         and $t_n$ to $p$
        \State Find the points $v_k=\pi_k\cap \ell_k$ and $v_{k+1}=\pi_{k+1}\cap \ell_{k+1}$ \label{line:find-points}
        \State Let $h'$ be the distance from $p$ to edge $v_kv_{k+1}$ \label{line:tower-check}
	\If{$h'<h$}
	\State $h=h'$
	\EndIf
        \EndFor
        \State \Return $h$
        
    \end{algorithmic}
\end{algorithm}

\section{Algorithm for the Discrete Version in 2.5D}
\label{sec:ptas}

In this section, we consider the 2.5D case. In this model, we start with an underlying planar triangulation of $n$ vertices $p_1,p_2,...,p_n$, with $p_i=(x_i,y_i)$; then each $\triangle p_ip_jp_k$ in
the plane is lifted so that each $p_i$ can possibly be within a vertical interval $I_i=((x_i,y_i,b_i),(x_i,y_i,t_i))$. We then use the 3D vertex (point) $q_i=(x_i,y_i,z_i)$ to represent a geometric realization of $p_i$, with $b_i\leq z_i \leq t_i$. Note that $\triangle p_ip_jp_k$ and $\triangle q_iq_jq_k$ are topologically equivalent. We call the lifted triangulation $T$ over 3D points $Q=\{q_i|i\in[n]\}$ an {\em imprecise terrain} over $Q$, keeping in mind that the $z$-coordinate of $q_i$ is a variable.

The optimistic shortest watchtower problem in 2.5D is then defined as
computing a shortest watchtower $|uv|$ such that $u$ is on $T$ and $v$ can see all points on $T$, among all possible realizations of $T$.

The problem is certainly harder than the 1.5D counterpart, one reason is probably that in 2.5D computing the shortest path between two vertices $q_i$ and $q_\ell$ on $T$ is already NP-hard \cite{gray-shortest-hard-04}. So, we need to first consider a more fundamental question:
is there an optimistic watchtower on $T$ of height zero? We call this problem the {\em zero-watchtower problem} on $T$.

It is easy to see that the zero-watchtower solution, if exists, must occur at some vertex $q_i$, i.e., $u=v=q_\ell$.

\begin{lemma}
If $q_\ell=(x_\ell,y_\ell,z_\ell)$ is a solution of the zero-watchtower problem on $T$, then $q'_\ell=(x_\ell,y_\ell,t_\ell)$ is also a solution.  
\label{3d-1}
\end{lemma}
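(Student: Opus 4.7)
The plan is to verify directly that the new vertex $q'_\ell$ sees every point of the terrain $T'$ obtained from $T$ by replacing $q_\ell$ with $q'_\ell=(x_\ell,y_\ell,t_\ell)$. The first observation I would record is that $T$ and $T'$ coincide outside the star of $q_\ell$: each incident triangle $\triangle q_iq_jq_\ell$ is replaced by $\triangle q_iq_jq'_\ell$, which shares the edge $\overline{q_iq_j}$ and has its apex lifted by $t_\ell - z_\ell \geq 0$; every other triangle is identical in $T$ and $T'$, and both realizations share the same planar triangulation.

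Next, for an arbitrary point $p'$ on $T'$, I would split into two cases. If $p'$ lies on an incident triangle $\triangle q_iq_jq'_\ell$, then $\overline{q'_\ell p'}$ lies in that triangle's plane; its projection is contained in the projected triangle by convexity, and no other triangle of the fixed planar triangulation projects onto this region, so the segment lies on $T'$ and $p'$ is trivially visible. If $p'$ lies on a non-incident triangle, then $p := p'$ is also a point of $T$ and the old segment $\overline{q_\ell p}$ lies on or above $T$ by hypothesis. Parametrising both segments by $s\in[0,1]$, the new segment $\overline{q'_\ell p}$ sits uniformly above the old one by $(t_\ell - z_\ell)(1-s)$, a quantity that decays linearly to zero at $p$.

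To finish the second case I would follow the projected ray from $p_\ell$ to the projection of $p$: it first crosses exactly one incident triangle on $[0, s^*]$, exiting through the edge opposite $p_\ell$, and subsequently only crosses non-incident triangles on $[s^*, 1]$; in particular $s^* \leq 1$, since $p$ does not lie on an incident triangle. On $[s^*, 1]$ the terrain is unchanged and the new segment is strictly higher, so visibility is preserved. On $[0, s^*]$ the incident triangle rises pointwise by $(t_\ell - z_\ell)\gamma$ with $\gamma = 1 - s/s^*$ (the barycentric coordinate of $q_\ell$ along the ray), and the required visibility inequality reduces to $1-s \geq 1 - s/s^*$, i.e.\ $s^* \leq 1$, which is exactly the geometric condition already established.

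The main obstacle is the analysis on the incident portion in the second case: I must confirm that the linearly decaying height gain of the new segment, which vanishes at $p$, still dominates the rise of the lifted incident triangle, which vanishes already at the earlier exit parameter $s^*$. The argument hinges on two structural facts — that a ray from $q'_\ell$ leaves the star of $q_\ell$ after traversing exactly one incident triangle, and that the planar triangulation is fixed — both of which are guaranteed by the imprecise 2.5D model.
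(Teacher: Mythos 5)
Your proof is correct, but it takes a genuinely different route from the paper's. The paper argues via supporting planes: since $q_\ell$ guards every face $F_j$ not adjacent to it, $q_\ell$ must lie on or above the plane of each such $F_j$; those planes are unchanged by the lift, so $q'_\ell$ still lies above all of them, while the adjacent faces have $q'_\ell$ as a vertex and are guarded trivially. This is short but implicitly invokes the classical characterization (underlying the shortest-watchtower algorithms of Sharir and Zhu) that a point guards a terrain if and only if it lies in the intersection of the upper halfspaces of all face planes. Your argument instead verifies visibility segment by segment: for a target on a non-incident face, the lifted segment $\overline{q'_\ell p}$ exceeds the old one by $(t_\ell-z_\ell)(1-s)$, while the only part of the terrain that rises --- the one incident triangle the projected ray crosses before leaving the star of $p_\ell$ --- rises by $(t_\ell-z_\ell)(1-s/s^*)$ with $s^*\le 1$, so the gain of the segment dominates the gain of the terrain. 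This is more elementary and self-contained, and it explicitly settles a point the paper's proof silently delegates to the halfspace characterization, namely that the lifted incident triangles cannot newly occlude the view from $q'_\ell$ to a far face. The price is a longer case analysis (including the degenerate case where the projected ray runs along an edge of the star, which your formula handles identically since the barycentric coordinate of $q_\ell$ still decays linearly to zero at the far endpoint of that edge). Both proofs are valid; yours would serve readers who do not want to import the guarding-region fact, while the paper's is the more economical once that fact is taken as known.
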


\begin{proof}
 If $q_\ell$ ($\neq q'_\ell$) is a solution then for all other triangular face $F_j$'s in $T$ not adjacent to $q_\ell$, $q_\ell$ must be on $F_j$ or above the half-plane determined by $F_j$ --- otherwise, $q_\ell$ cannot guard some face $F_j$ and cannot be a solution. Fixing all these face $F_j$'s, we then lift $q_\ell$ vertically to $q'_\ell$. Clearly, $q'_\ell$ is above all $F_j$'s. The remaining faces are all adjacent to $q'_\ell$ (before the lift-up, adjacent to $q_\ell$), hence are all guarded by $q'_\ell$.
\qed
\end{proof}




Given an edge $e_k=p_ip_j$ on $T$, define $H(p_ip_j)$ or $H(e_k)$ as the vertical half-plane bounded above by the line through $e_k=p_ip_j$. We have the following~lemma.

\begin{lemma}
Fixing $q'_\ell=(x_\ell,y_\ell,t_\ell)$, there is no solution for the zero-watchtower problem at $q'_\ell$ if and only if (1) there is a vertex $p_k\in T$ whose highest realization $p'_k$
cannot see $q'_\ell$ regardless of the realization of other vertices; or (2) there is an edge $p_cp_d\in T$ whose lowest realization $p^{-}_cp^{-}_d$ is not visible to $q'_\ell$, regardless of the realization of other vertices.
\label{3d-2}
\end{lemma}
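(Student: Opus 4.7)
The plan is to prove the equivalence in two directions: the forward direction via monotonicity sub-claims, and the reverse direction (the main challenge) via a structural argument on the visibility constraints.

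For the direction ``(1) or (2) implies no solution,'' I would establish two monotonicity sub-claims. For (1), the sub-claim is that raising a vertex $p_k$ to its top $t_k$ (other vertices held fixed) preserves its visibility from $q'_\ell$: the segment from $q'_\ell$ to $p'_k$ lies pointwise above the segment to any lower realization of $p_k$, so non-incident triangles, being unchanged by the raise, cannot begin to block a higher segment they failed to block before, while incident triangles share $p'_k$ as a vertex and cannot obstruct the segment beyond its endpoint. Applied contrapositively, (1) forces $p_k$ to be invisible in every realization, so $q'_\ell$ cannot guard $T$. For (2), the symmetric sub-claim is that lowering the endpoints of an edge $p_cp_d$ to their bottoms preserves the edge's visibility from $q'_\ell$: the two adjacent faces pivot so that their supporting planes tilt downward, away from $q'_\ell$ at height $t_\ell$, making visibility easier. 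Again contrapositively, (2) forces the edge to be invisible in every realization.

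For the reverse direction I would argue the contrapositive: assuming NOT (1) AND NOT (2), I would construct a realization in which $q'_\ell$ guards $T$. The approach rests on the half-space characterization of guarding: $q'_\ell$ guards $T$ iff $q'_\ell$ lies on or above the plane of every non-incident triangle $F_j$, and each such constraint is a linear inequality in the vertex-height variables $\{z_i\}$ via the barycentric representation of $(x_\ell,y_\ell)$ in the $xy$-triangle of $F_j$. Zero-watchtower feasibility at $q'_\ell$ is therefore an LP feasibility problem over the box $\prod_i [b_i,t_i]$. The hypothesis NOT (1) guarantees individual satisfiability of certain vertex-focused restrictions of this LP (those ensuring a single $p_k$ at its top is visible), and NOT (2) does the same for edge-focused restrictions. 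The plan is then to exploit the planar structure of the triangulation at $(x_\ell,y_\ell)$ to show that any LP infeasibility must localize to a single vertex or single edge obstruction, yielding (1) or (2) respectively.

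The main obstacle will be the reverse direction: NOT (1) and NOT (2) only provide per-vertex and per-edge witnessing realizations that can be mutually incompatible. Showing that they combine into a single feasible realization---equivalently, that any LP infeasibility localizes to a vertex- or edge-scale obstruction---is the technical heart of the argument and will require careful use of the barycentric structure of each triangle's plane constraint together with the combinatorial adjacency of the triangulation around $(x_\ell,y_\ell)$.
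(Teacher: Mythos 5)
Your proposal contains two genuine gaps. First, in the forward direction, your monotonicity sub-claim for case (2) --- that lowering the endpoints of $p_cp_d$ to $p^{-}_c,p^{-}_d$ preserves the edge's visibility from $q'_\ell$ --- is false in general. Lowering a \emph{target} lowers every sight segment from $q'_\ell$ to a point $x$ on the edge, so a face that is not incident to $p_c$ or $p_d$, and hence unchanged by the move, can rise above the new, lower sight segments even though it lay below the old ones; your justification (the adjacent faces pivot downward) only rules out self-occlusion by the two incident faces, not occlusion by distant, fixed parts of the terrain. This is precisely why the paper does not argue case (2) by target monotonicity: it instead observes that when the lowest edge is invisible but its endpoints are visible, one of the two \emph{adjacent faces} is invisible, and then lowers the apex $p_f$ of that face to $p^{-}_f$ and tests whether $\triangle p^{-}_cp^{-}_dp^{-}_f$ remains invisible. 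Your case-(1) argument (raising a vertex preserves its own visibility) is sound and consistent with Lemma~\ref{3d-1} and with the paper's observation that the blocking segment meets even $H(p^{-}_ip^{-}_j)$.

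Second, your reverse direction is a plan rather than a proof. Recasting zero-watchtower feasibility at $q'_\ell$ as LP feasibility of the constraints $t_\ell \ge \lambda_i z_i+\lambda_j z_j+\lambda_k z_k$ over the box $\prod_i[b_i,t_i]$ is a legitimate and genuinely different framing from the paper's (which simply declares this ``only if'' direction easy to see and gives no argument), but the entire content of that direction is the localization claim --- that infeasibility must be witnessed by a single vertex or a single edge --- and you explicitly defer it. Note also that a single face constraint can have barycentric coefficients of mixed sign when $(x_\ell,y_\ell)$ lies outside the projected triangle, so the extremal corner of the box for one constraint need not be the all-top or all-bottom assignment that conditions (1) and (2) single out; a completed proof must reconcile this with the specific vertex-at-top / edge-at-bottom form of the lemma's two cases. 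Until both gaps are closed, the equivalence is not established.
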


\begin{proof}
Note that in case (2), when $p^{-}_cp^{-}_d$ is not visible to $q'_\ell$, one of the two faces
adjacent to $p^{-}_cp^{-}_d$ is also not visible to $q'_\ell$.

We proceed to prove this ``if and only if'' relation. The ``only if'' part is easy to see. We focus on proving the ``if'' part, i.e., assuming that there is such a $p'_k$ or $p^{-}_cp^{-}_d$ satisfying the condition in Lemma~\ref{3d-2}. For Case (1),  when $p'_k$ and $q'_\ell$ are not visible to each other, there must be an edge $p_ip_j$ such that the segment $p'_kq'_\ell$ intersects
the vertical half-plane $H(p_ip_j)$; in fact, the segment would even intersect $H(p^{-}_ip^{-}_j)$. See \figref{hidden-point} for an example.
As a matter of fact, $q'_\ell$ cannot be a zero-watchtower solution. At this point, suppose there is no vertex $p_k$ satisfying the condition in Case (1). Now consider Case (2), assuming there is an edge $p_cp_d\in T$ whose lowest realization $p^{-}_cp^{-}_d$ is not visible to $q'_\ell$ and one of the two faces adjacent to it, say $\triangle p^{-}_cp^{-}_dp_f$, is not visible to $q'_\ell$ (but the three vertices $p^{-}_c,p^{-}_d$ and $p_f$ are all visible to $q'_\ell$). In this case, if we lower $p_f$ to its lowest realization $p^{-}_f$ and the edge $p^{-}_cp^{-}_d$ (and the corresponding face $\triangle p^{-}_cp^{-}_dp^{-}_f$) are still not visible to $q'_\ell$, then certainly $q'_\ell$ cannot be a zero-watchtower solution. See \figref{hidden-face} for an example.
\qed
\end{proof}

\begin{figure}[t]
    \captionsetup[subfigure]{justification=centering}
    \centering
    \begin{subfigure}[t]{0.46\textwidth}
    	\includegraphics[width=\textwidth]{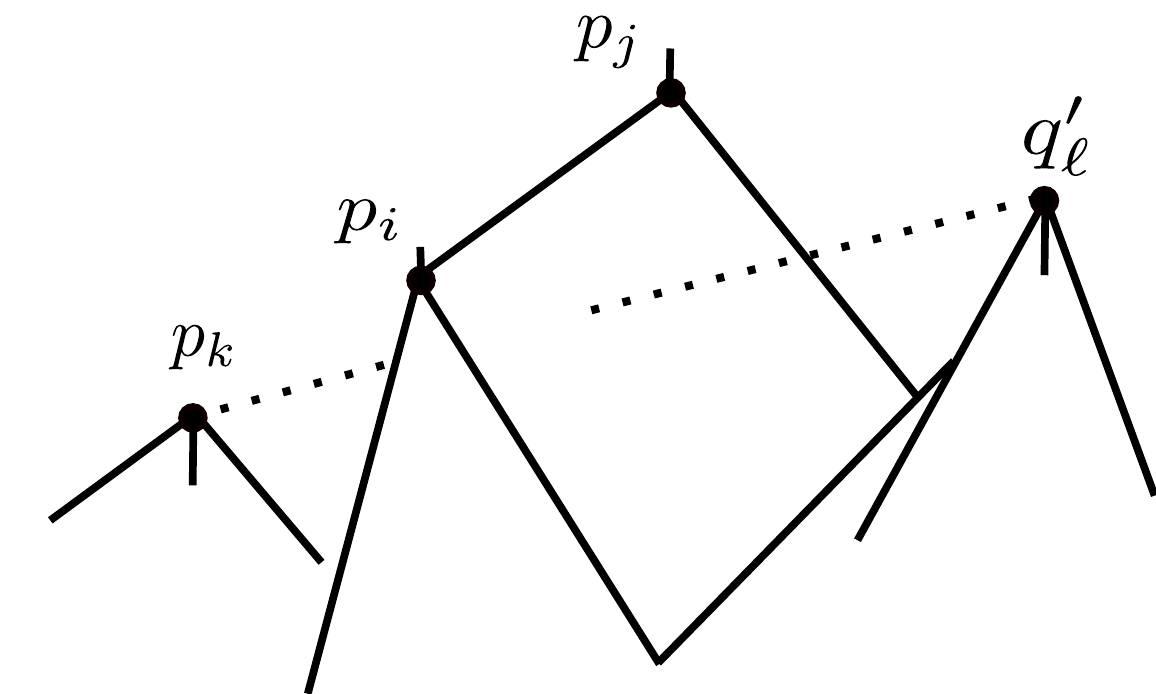}
	    \subcaption{}
    \label{fig:hidden-point}
    \end{subfigure}
    \hspace{.5cm}
    \begin{subfigure}[t]{0.46\textwidth}
  	    \includegraphics[width=\textwidth]{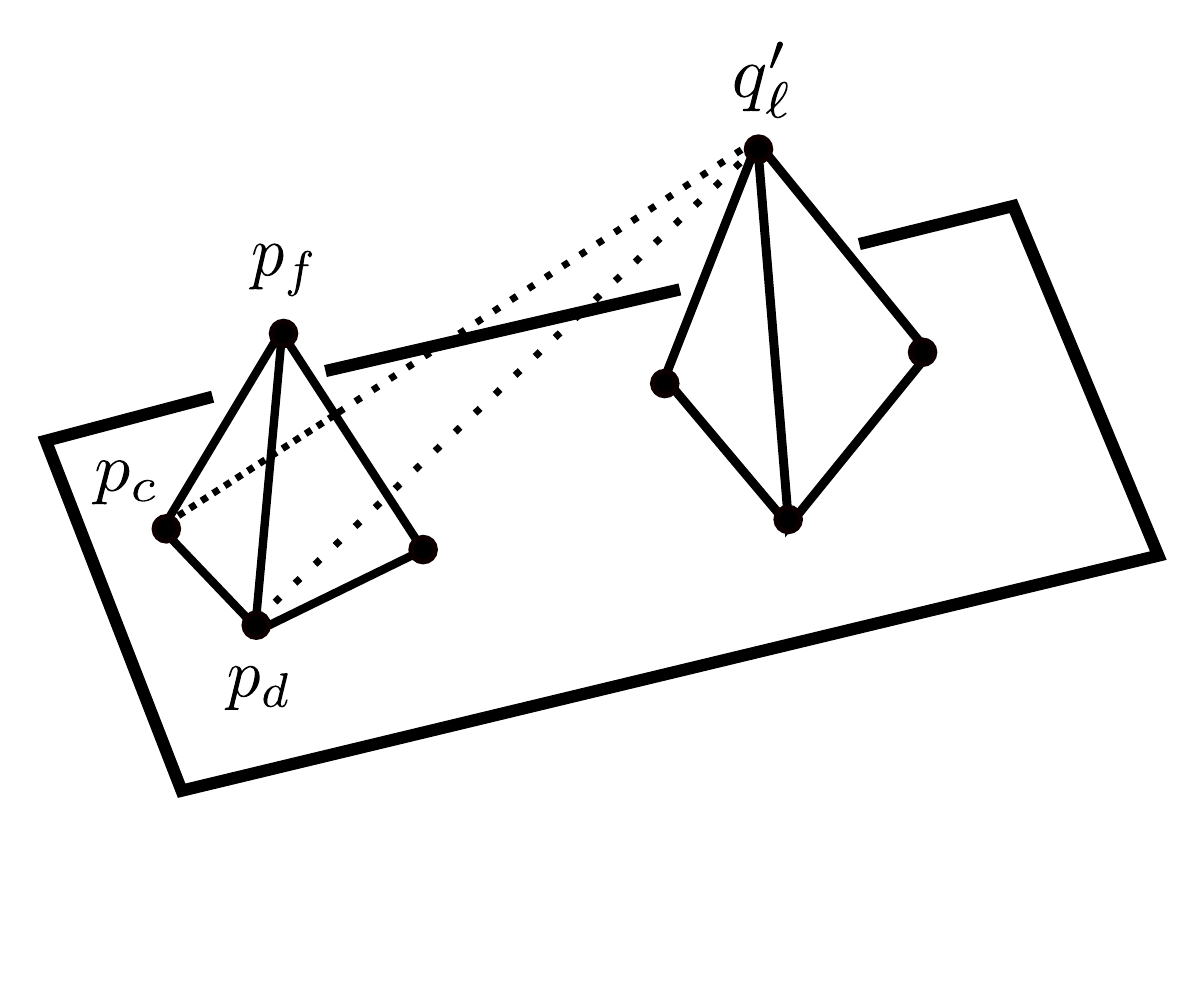}
	    \subcaption{}
    \label{fig:hidden-face}
    \end{subfigure}

	\caption{Illustration for the proof of Lemma~\ref{3d-2}.}
\label{fig:fig4}
\end{figure}


The proof of Lemma~\ref{3d-2} is constructive and it implies an algorithm to test if $q'_\ell$
is a solution for the zero-watchtower problem.

\begin{theorem}
 The zero-watchtower problem on $T$ can be decided in $O(n^3)$ time.
 \label{thm-3d}
\end{theorem}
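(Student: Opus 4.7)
The plan is to combine Lemma~\ref{3d-1} and Lemma~\ref{3d-2} into a direct decision procedure. By Lemma~\ref{3d-1} there are only $n$ candidate zero-watchtower locations, namely the top-lifted vertices $q'_1,\ldots,q'_n$ with $q'_\ell=(x_\ell,y_\ell,t_\ell)$. For each candidate I would test the two failure conditions of Lemma~\ref{3d-2}; if neither condition fires for some $q'_\ell$, we report success, and otherwise, if every candidate triggers at least one of the two conditions, we conclude that no zero-watchtower exists on $T$.

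To test condition~(1) at a fixed $q'_\ell$, I iterate over each other vertex $p_k$ and check whether the segment from $p'_k$ (lifted to the top of $I_k$) to $q'_\ell$ pierces the vertical half-plane $H(p^-_i p^-_j)$ of some edge of $T$ in its lowest realization; since lowering an edge shrinks its half-plane, intersecting $H(p^-_i p^-_j)$ certifies blocking regardless of the realizations of $p_i$ and $p_j$. Each segment-versus-half-plane test is $O(1)$, and with $O(n)$ edges to scan for each of the $O(n)$ vertices $p_k$, this contributes $O(n^2)$ per candidate. To test condition~(2), I iterate over each edge $p_c p_d$ of $T$ and check whether its lowest realization $p^-_c p^-_d$, together with each of the (at most two) adjacent faces in their all-lowered form $\triangle p^-_c p^-_d p^-_f$, remains visible from $q'_\ell$; blocking is again detected by scanning the $O(n)$ half-planes $H(p^-_i p^-_j)$ for one that cuts the segment or crosses the triangle as seen from $q'_\ell$. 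This is $O(n)$ per edge (and per adjacent face), so another $O(n^2)$ in total for condition~(2).

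Summing, each candidate $q'_\ell$ is decided in $O(n^2)$ time, and enumerating all $n$ candidates yields the claimed $O(n^3)$ bound. The main technical point I expect to be the most delicate is the face-level subroutine in condition~(2): one has to be certain that the single adversarial realization used, in which $p_f$ and every would-be blocker are simultaneously lowered, really certifies non-visibility of the whole triangle regardless of the realizations of the remaining vertices. This is exactly what the constructive proof of Lemma~\ref{3d-2} supplies, so no further enumeration over realizations is needed and the overall $O(n^3)$ running time follows.
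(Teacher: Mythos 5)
Your proposal is correct in outline and reaches the same $O(n^3)$ bound, but it organizes the per-candidate work differently from the paper. You treat Lemma~\ref{3d-2} as a static characterization: for each of the $n$ candidates $q'_\ell$ (justified by Lemma~\ref{3d-1}), you test conditions (1) and (2) in one shot against the extreme realizations, using the monotonicity observation that lowering an edge shrinks its blocking half-plane, so that intersecting $H(p^-_ip^-_j)$ certifies blocking in \emph{every} realization; this gives a clean $O(n^2)$ decision per candidate. The paper instead runs an iterative morphing procedure: it starts from the all-highest realization, repeatedly lowers the edges blocking each invisible $p'_k$ just enough to restore visibility (processing the farthest candidate vertex first), then lowers the apexes of invisible faces, and declares failure only when a blocker at its lowest realization still obstructs; the bookkeeping that these lowerings never conflict is part of its argument. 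The practical difference is that the paper's procedure is constructive --- when it succeeds it exhibits a concrete realization guarded by $q'_\ell$, which is what the subsequent approximation scheme returns --- whereas your version is a pure decision procedure whose ``yes'' answers lean entirely on the ``only if'' direction of Lemma~\ref{3d-2} as a black box. Since the theorem only asks to \emph{decide} the zero-watchtower problem, your route is legitimate and arguably simpler, but you should be explicit that the existence of a consistent witness realization when neither condition fires is exactly what that direction of the lemma (and, implicitly, the paper's constructive procedure) supplies; your closing remark about the face-level subroutine correctly identifies where that burden lies.
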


\begin{proof} 
 With the above lemma, we obtain an algorithm as follows. We start with the highest realization of 
 all vertices in $T$ and then focus on $q'_\ell$. If $q'_\ell$ can see all the edges of the
 current realization of $T$, then we are done. If not, following Lemma~\ref{3d-2}, we have the following two cases.
 
 In Case (1), if there is a $p'_k$ (the highest realization of $p_k$) such that $p'_k$ and $q'_\ell$ cannot see each other, i.e., they intersect some $H(p_ip_j)$. Then we label $p'_k$ as a candidate vertex,
 and identify all the edges of $T$, say $e_{k,1},e_{k,2},...,e_{k,s_k}$, whose vertical half-planes $H(e_{k,1}), H(e_{k,2}),\cdots, H(e_{k,s_k})$ intersect the line segment $p'_{k}q'_\ell$. (If there are multiple such candidate vertices, we process the farthest one from $q'_\ell$ first.)  We then lower the realizations
 of these $s_k$ edges by translating them vertically down to a position such that $p'_k$ and $q'_\ell$ are barely visible to each other (i.e., they just touch the line segment $p'_{k}q'_\ell$). If it is impossible to make $p'_k$ and $q'_\ell$ visible by these downward translations, then we conclude that there is no solution for the zero-watchtower problem at $q'_\ell$.
 Otherwise, we repeat the above process until all candidate vertex $p'_k$'s are now visible to $q'_\ell$ but $q'_\ell$ still cannot cover the current realization of $T$, then we continue with the next step.
 
 By now in Case (2), suppose there is an edge $p_cp_d$ whose lowest realization $p^{-}_cp^{-}_d$
 cannot be seen from $q'_\ell$. As discussed earlier, one of the two faces adjacent to
 $p^{-}_cp^{-}_d$, say $\triangle p^{-}_cp^{-}_dp_f$, is also not visible to $q'_\ell$. We then lower $p_f$ to its lowest realization $p^{-}_f$. If $\triangle p^{-}_cp^{-}_dp^{-}_f$ is visible to
 $q'_\ell$ then we repeat this process for Case (2) to find  and process some other edges not visible to $q'_\ell$; otherwise, we report that $q'_\ell$ cannot be a solution for the zero-watchtower problem.
 
 Note that in the whole process for Case (1) and Case (2), except at the beginning step, we only lower vertices and edges to their lowest realization, hence there is no conflict. Each step of Case (1) and Case (2) obviously takes $O(n)$ time, and in each case we could have $O(n)$ steps (e.g., in Case (1) we have $O(n)$ candidate vertex $p'_k$'s). Clearly, if there is a solution when $q'_\ell$ fixed, we would spend $O(n^2)$ time to return a solution.

 Since we have to try $O(n)$ vertex $q'_\ell$'s, the whole algorithm takes $O(n^3)$ time to either return a solution, or report that no solution exists for the zero-watchtower~problem.
\qed
\end{proof}

Theorem~\ref{thm-3d} also implies that we could obtain an additive approximation scheme for the discrete shortest watchtower problem on $T$ where the base must be at a vertex of $T$. The algorithm first decides if
a zero-solution exists; if so then we can solve the problem in $O(n^3)$ time. Otherwise,
at each candidate base vertex $q'_\ell$ we check the point $q''_\ell$  which is $\varepsilon$ distance above $q'_\ell$ vertically. If $q''_\ell$ can see every edge of a realization of $T$ then
return the vertical segment $q'_\ell q''_\ell$ as the approximate solution. Otherwise, update $q''_\ell \leftarrow q''_\ell \oplus\varepsilon$ and repeat the algorithm in Theorem~\ref{thm-3d} (where $v\oplus \varepsilon$ is the operation to move $v$ by a vertical distance $\varepsilon$). If the optimal solution value is ${OPT}$, then we can certainly find a solution with solution value at most ${OPT}+\varepsilon$.

\begin{corollary}
An additive approximation scheme for the discrete shortest watchtower on an imprecise terrain $T$ with $n$ vertices
can be computed in $O(\frac{{OPT}}{\varepsilon}n^3)$ time, and with a solution value at most ${OPT}+\varepsilon$, where ${OPT}$ is the optimal solution value of the problem.
\end{corollary}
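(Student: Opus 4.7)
The plan is to bootstrap the zero-watchtower decision algorithm of Theorem~\ref{thm-3d} into a height-stepping approximation. First I would run Theorem~\ref{thm-3d} once to test whether $OPT = 0$; if so, return that solution in $O(n^3)$ time. Otherwise, I would search for the optimal tower by discretizing candidate tower heights into multiples of $\varepsilon$ and invoking essentially the same visibility-certificate subroutine at each trial height.

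Concretely, for each candidate base vertex $q'_\ell = (x_\ell, y_\ell, t_\ell)$ (there are $O(n)$ of them) and each trial height $h \in \{\varepsilon, 2\varepsilon, 3\varepsilon, \ldots\}$, I would set the viewpoint to $q''_\ell = (x_\ell, y_\ell, t_\ell + h)$ and invoke the per-vertex procedure from the proof of Theorem~\ref{thm-3d}, with $q''_\ell$ in place of $q'_\ell$ as the fixed vantage point, to decide whether some realization of $T$ makes $q''_\ell$ see every face. I would stop incrementing $h$ at the first successful test for that $\ell$, and return the shortest segment $\overline{q'_\ell q''_\ell}$ found across all $\ell$.

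For correctness, the key ingredient is an analog of Lemma~\ref{3d-1} for nonzero watchtowers: if the optimal tower has base $u^* = (x_\ell, y_\ell, z^*_\ell)$ inside $I_\ell$ and top $v^*$ at absolute height $z^*_\ell + OPT$ in an optimal realization, then lifting the base to $q'_\ell = (x_\ell, y_\ell, t_\ell)$ and placing the viewpoint at $(x_\ell, y_\ell, t_\ell + h)$ with $h \ge OPT - (t_\ell - z^*_\ell)$ still guards the resulting realization. The argument follows the proof of Lemma~\ref{3d-1}: faces not incident to $q_\ell$ are unchanged and continue to lie under the raised viewpoint; each face incident to $q_\ell$ still has $q'_\ell$ as a vertex, and because the tower is vertical the viewpoint sits directly above that vertex, which suffices to preserve visibility of the face. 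Choosing $h = \lceil OPT/\varepsilon\rceil\,\varepsilon$ then bounds the returned tower length by $OPT + \varepsilon$.

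The running time follows immediately: each visibility test costs $O(n^2)$ by Theorem~\ref{thm-3d}, at most $O(OPT/\varepsilon)$ trial heights are attempted per base vertex, and there are $O(n)$ base vertices, giving $O(\frac{OPT}{\varepsilon} n^3)$ overall. The main obstacle will be the base-lifting lemma above, since raising $u^*$ to $t_\ell$ deforms every face of $T$ adjacent to it and one must rule out new obstructions for the now higher viewpoint; the verticality of the watchtower is what makes the argument go through, because every deformed adjacent face still contains $q'_\ell$ as a vertex and the viewpoint lies on the upward vertical ray from $q'_\ell$. Once that lemma is in hand, the rest is standard bookkeeping on the $\varepsilon$-grid.
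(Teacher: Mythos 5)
Your proposal matches the paper's own argument: the same $\varepsilon$-stepping of the viewpoint above each candidate base vertex $q'_\ell$, re-running the feasibility test of Theorem~\ref{thm-3d} at each trial height, with the identical $O(\frac{{OPT}}{\varepsilon}n^3)$ accounting over $O(n)$ vertices, $O({OPT}/\varepsilon)$ heights, and $O(n^2)$ per test. The base-lifting lemma you flag as the main obstacle is indeed required but is left implicit in the paper (Lemma~\ref{3d-1} covers only the zero-height case), so making it explicit is a refinement of, not a departure from, the paper's route.
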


\section{Concluding Remarks}
\label{sec:discuss}
In this paper we solve the optimistic shortest watchtower problem optimally (in linear time) for the 1.5D case. For the discrete version in 2.5D we give an additive approximation scheme which runs in $O(\frac{{OPT}}{\varepsilon}n^3)$ time, where $n$ is the number of imprecise vertical intervals and ${OPT}$ is the optimal solution value. A natural question is to determine the computational complexity of the 2.5D problem. Given that finding shortest paths is hard in this setting, we conjecture that the 2.5D problem is intractable. As a further evidence for our conjecture, we comment that for the discrete version in 2.5D, our approximation scheme is not a PTAS, as ${OPT}$ is not a part of the input.
Hence, even designing a PTAS for the discrete version is itself an interesting open problem.



\bibliography{references}

@inproceedings{driemel_flow_2011,
	title = {Flow Computations on Imprecise Terrains},
	booktitle = {Algorithms and {Data} {Structures} (WADS)},
	publisher = {Springer},
	author = {Driemel, Anne and Haverkort, Herman and Löffler, Maarten and Silveira, Rodrigo I.},
	year = {2011},
	pages = {350--361},
}

@article{CS89,
  author       = {Richard Cole and
                  Micha Sharir},
  title        = {Visibility Problems for Polyhedral Terrains},
  journal      = {J. Symb. Comput.},
  volume       = {7},
  number       = {1},
  pages        = {11--30},
  year         = {1989}
}

@article{king_terrain_2011,
	title = {Terrain Guarding is {NP}-Hard},
	volume = {40},
	number = {5},
	journal = {SIAM Journal on Computing},
	author = {King, James and Krohn, Erik},
	year = {2011},
	pages = {1316--1339},
}

@article{Loffler_viewpoints_2025,
    author = {Keikha, Vahideh and Löffler, Maarten and Saumell, Maria and Valtr, Pavel},
    title = {Guarding a 1.5{D} terrain with Imprecise Viewpoints},
    journal = {European Workshop on Computational Geometry (EuroCG)},
    year = {2025}
}

@article{guibas_linear-time_1987,
	title = {Linear-time algorithms for visibility and shortest path problems inside triangulated simple polygons},
	volume = {2},
	number = {1},
	journal = {Algorithmica},
	author = {Guibas, Leonidas and Hershberger, John and Leven, Daniel and Sharir, Micha and Tarjan, Robert E.},
	year = {1987},
	pages = {209--233},
}

@article{seth_acrophobic_2023,
	title = {Acrophobic guard watchtower problem},
	volume = {109},
	journal = {Computational Geometry},
	author = {Seth, Ritesh and Maheshwari, Anil and Nandy, Subhas C.},
	month = feb,
	year = {2023},
	keywords = {Acrophobic guard, Algorithms, Complexity, Terrain guarding, Watchtower},
	pages = {101918},
}

@article{benmoshe_constantfactor_2007,
	title = {A Constant-Factor Approximation Algorithm for Optimal 1.5{D} Terrain Guarding},
	volume = {36},
	language = {en},
	number = {6},
	journal = {SIAM Journal on Computing},
	author = {Ben-Moshe, Boaz and Katz, Matthew J. and Mitchell, Joseph S. B.},
	year = {2007},
	pages = {1631-1647},
}

@incollection{chen_guarding_2018,
	address = {Cham},
	title = {Guarding Polyhedral Terrain by k-Watchtowers},
	volume = {10823},
	urldate = {2025-01-03},
	booktitle = {Frontiers in {Algorithmics}},
	publisher = {Springer International Publishing},
	author = {Tripathi, Nitesh and Pal, Manjish and De, Minati and Das, Gautam and Nandy, Subhas C.},
	year = {2018},
	pages = {112--125},
}

@article{agarwal_guarding_2010,
	title = {Guarding a Terrain by Two Watchtowers},
	volume = {58},
	number = {2},
	journal = {Algorithmica},
	author = {Agarwal, Pankaj K. and Bereg, Sergey and Daescu, Ovidiu and Kaplan, Haim and Ntafos, Simeon and Sharir, Micha and Zhu, Binhai},
	year = {2010},
	pages = {352--390},
}

@article{sharir_shortest_1988,
	title = {The shortest watchtower and related problems for polyhedral terrains},
	volume = {29},
	number = {5},
	journal = {Information Processing Letters},
	author = {Sharir, Micha},
	year = {1988},
	pages = {265--270},
}

@inproceedings{gray-shortest-hard-04,
  author    = {Gray, Chris and Evans, William},
  title     = {Optimistic shortest paths on uncertain terrains},
  booktitle = {Proceedings of the 16th Canadian Conference on Computational Geometry,
               (CCCG)},
  pages     = {68--71},
  year      = {2004}
}

@article{lee_optimal_1979,
	title = {An Optimal Algorithm for Finding the Kernel of a Polygon},
	volume = {26},
	number = {3},
	journal = {Journal of the ACM},
	author = {Lee, D. T. and Preparata, F. P.},
	year = {1979},
	pages = {415--421},
}

@inproceedings{mccoy_guarding_2024,
	title = {Guarding Precise and Imprecise Polyhedral Terrains with Segments},
	booktitle = {Combinatorial {Optimization} and {Applications} (COCOA)},
	publisher = {Springer Nature Switzerland},
	author = {McCoy, Bradley and Zhu, Binhai and Dutt, Aakash},
	year = {2023},
	pages = {323-336},
}

@article{Lubiw_Stroud_2023, 
title={Computing Realistic Terrains from Imprecise Elevations}, 
volume={2},  
number={2}, 
journal={Computing in Geometry and Topology}, 
author={Lubiw, Anna and Stroud, Graeme}, 
year={2023},
 pages={3:1–3:18} 
 }

@article{zhu_computing_1997,
	title = {Computing the shortest watchtower of a polyhedral terrain in ${O}(n\log n)$ time},
	volume = {8},
	number = {4},
	urldate = {2023-02-18},
	journal = {Computational Geometry},
	author = {Zhu, Binhai},
	year = {1997},
	pages = {181--193},
}

@article{gray_smoothing_2010,
	title = {Smoothing Imprecise 1.5{D} Terrains},
	volume = {20},
	number = {04},
	urldate = {2023-02-18},
	journal = {International Journal of Computational Geometry \& Applications},
	author = {Gray, Chris and Löffler, Maarten and Silveira, Rodrigo I.},
	year = {2010},
	pages = {381--414},
}

@article{gray_removing_2012,
	title = {Removing local extrema from imprecise terrains},
	volume = {45},
	number = {7},
	urldate = {2023-02-18},
	journal = {Computational Geometry},
	author = {Gray, Chris and Kammer, Frank and Löffler, Maarten and Silveira, Rodrigo I.},
	year = {2012},
	pages = {334--349},
}

\end{document}